\documentclass[11pt]{article}
\usepackage{geometry}
\geometry{a4paper}
\usepackage{amsmath, amssymb, amsthm}
\usepackage{graphicx}
\usepackage{bm}
\usepackage[linesnumbered,ruled]{algorithm2e}
\SetKwRepeat{Do}{do}{while}
\usepackage{fullpage}
\usepackage{enumerate}
\usepackage{algorithmic}
\newtheorem{theorem}{Theorem}
\newtheorem{lemma}{Lemma}

\newtheorem{definition}{Definition}
\newtheorem{proposition}{Proposition}

\DeclareMathOperator{\dist}{dist}
\DeclareMathOperator{\elm}{elm}
\DeclareMathOperator{\sgn}{sgn}

\title{On Connectedness of Solutions to Integer Linear Systems\thanks{The conference proceedings version has appeared in Proceedings of the 16th Annual International Conference on Combinatorial Optimization and Applications (COCOA2023), LNCS 14461, pages 421--433, 2023. }
\thanks{This work was supported 
by JSPS KAKENHI Grant Number JP20H05795.}}
\author{Takasugu Shigenobu\thanks{%
Graduate School of Mathematics, 
Kyushu University.
{\ttfamily shigenobu.takasugu.563@s.kyushu-u.ac.jp}}
\and 
Naoyuki Kamiyama\thanks{%
Institute of Mathematics for Industry, 
Kyushu University.
{\ttfamily kamiyama@imi.kyushu-u.ac.jp}}
}
\date{}

\begin{document}

\maketitle

\begin{abstract}
An integer linear system (ILS) is a linear system with integer constraints. 
The solution graph of an ILS is defined as an undirected graph defined on the set of feasible solutions to the ILS.
A pair of feasible solutions is connected by an edge in the solution graph if the Hamming distance between them is $1$.
We consider a property of the coefficient matrix of an ILS such that the solution graph is connected for any right-hand side vector.
Especially, we focus on the existence of an elimination ordering (EO) 
of a coefficient matrix, which is known as the sufficient condition for the connectedness of the solution graph for any right-hand side vector. 
That is, we consider the question whether the existence of an EO of the coefficient matrix is a necessary condition for the connectedness of the solution graph for any right-hand side vector.
We first prove that if a coefficient matrix has at least four rows and at least three columns, then the existence of an EO may not be a necessary condition.
Next, we prove that if a coefficient matrix has at most three rows or at most two columns, then the existence of an EO is a necessary condition.
\end{abstract} 

\section{Introduction}
An integer linear system (ILS) has an $ m \times n $ real coefficient matrix $A$, an $m$-dimensional real vector $b$, and a positive integer $d$.
In this case, a feasible solution of the ILS is an $n$-dimensional integer vector $x \in \{0,1,\dots,d\}^n$ such that $A x \ge b$.
The solution graph of an ILS is defined as an undirected graph defined on the set of feasible solutions to the ILS.
A pair of feasible solutions is connected by an edge in the solution graph 
if the Hamming distance between them is $1$.

We consider a property of the coefficient matrix of an ILS such that the solution graph is connected for any right-hand side vector.
Especially, we focus on the existence of an elimination ordering (EO) of a coefficient matrix, which is know as the sufficient condition for the connectedness of the solution graph for any right-hand side vector. 
(See Section \ref{section:Pre} for the definition of an EO.)
That is, we consider the question whether the existence of an EO of the coefficient matrix is a necessary condition for the connectedness of the solution graph for any right-hand side vector.

The results of this paper are summarized as follows.
We first prove that if a coefficient matrix has at least four rows and at least three columns, then the existence of an EO may not be a necessary condition (Theorem \ref{thm:main1}).
On the other hand, we also prove that if a coefficient matrix has at most three rows or at most two columns, then the existence of an EO is a necessary condition (Theorem \ref{thm:main2}).
In fact, we prove the contraposition of the statement. 
That is, we prove that 
if a coefficient matrix does not have an EO, then there exists an 
right-hand side vector such that the solution graph is not connected.

Kimura and Suzuki \cite{KIMURA202188} proved that if the coefficient matrix of an ILS has an EO, then the solution graph of the ILS is connected for any right-hand side vector. 
Precisely speaking, Kimura and Suzuki~\cite[Theorem 5.1]{KIMURA202188} proved that 
if the complexity index of the coefficient matrix of an ILS 
introduced by Kimura and Makino~\cite{KIMURA201667} is less than $1$, 
then the solution graph is connected. 
Furthermore, Kimura and Makino~\cite[Lemma~3]{KIMURA201667} proved that 
the complexity index is less than $1$ if and only if the coefficient matrix has an EO.
The complexity index of the coefficient matrix of an ILS
is a generalization of the 
complexity index for the Boolean satisfiability problem (SAT) 
introduced by 
Boros, Crama, Hammer, and Saks~\cite{doi:10.1137/S0097539792228629complexity},
and it depends only on the sign of the elements of the matrix.
It is known that 
we can determine whether the coefficient matrix of an ILS has an EO in polynomial time.

The connectedness of the solution graph of an ILS is closely related to a reconfiguration problem of the ILS.
A reconfiguration problem is a problem of finding a sequence of feasible solutions from the initial solution to the target solution 
(see, e.g.,~\cite{IDHPSUU11,N18}).
For ILS, the standard reconfiguration problems asks whether a given pair of feasible solutions to the ILS belong to the same connected component of the solution graph of the ILS.
Therefore, if the solution graph of an ILS is connected, then the answer is always YES.
Kimura and Suzuki~\cite{KIMURA202188} proved that 
computational complexity of the reconfiguration problem 
for the set of feasible solutions of 
an ILS has trichotomy.

An ILS is closely related to SAT.
An instance of SAT can be formulated by an ILS.
It is known that 
computational complexity of the reconfiguration problem of SAT has dichotomy~\cite{doi:10.1137/07070440X,Schwerdtfeger14}.

See Appendix for the proofs of the 
statements marked by $\star$.

\section{Preliminaries}\label{section:Pre}
In this paper, let $\mathbb{R}$ and $\mathbb{Z}_{>0}$ denote the sets of real numbers and positive integers, respectively.
For all integers $n \in \mathbb{Z}_{>0}$, we define $[n] := \{ 1, 2, \ldots , n \}$.
First, we formally define an integer linear system and its solution graph.
Throughout this paper, we fix a positive integer $d$. 
Define $D := \{0,1,\dots,d\}$. 
The set $D$ represents the domain of a variable in an integer 
linear system. 

\begin{definition}[Integer linear system]
    An integer linear system {\rm(ILS)} has a coefficient matrix $A=(a_{ij}) \in \mathbb{R}^{[m] \times [n]} $ and a vector $b \in \mathbb{R}^{[m]}$.
    This ILS is denoted by $I = (A,b)$.
    A feasible solution to $I$ is a vector $x \in D^{[n]}$ such that $A x \ge b$. 
    The set of feasible solutions to $I$ is denoted by $R(I)$ or $R(A,b)$.
\end{definition}

\begin{definition}[Hamming distance]
    Define the function $\dist \colon \mathbb{R}^{[n]} \times \mathbb{R}^{[n]} \to \mathbb{R}$ by 
    \(
    \dist(x,y) :=  \left| \left\{ j \in \{1,\ldots,n\} : x_j \neq y_j \right\} \right|
    \)
    for all vectors $x,y \in \mathbb{R}^{[n]}$. This function is called 
    the Hamming distance on $\mathbb{R}^{[n]}$.
\end{definition}

\begin{definition}[Solution graph]
    Let $R$ be a subset of $D^{[n]}$.
    We define the vertex set $V(R) := R$ and the edge set $E(R) := \{ \{ x, y \} : x,y \in V(R) , \dist(x, y) = 1 \}$.
    We define the solution graph $G(R)$ as the undirected graph with the vertex set $V(R)$ and the edge set $E(R)$.
    Furthermore, for each ILS $I$, we define $G(I) := G(R(I))$. 
\end{definition}

Next, we define elimination and an eliminated matrix.
These concepts are used to define an elimination ordering (EO).

\begin{definition}[Elimination]\label{def:elimination}
    Let $A = (a_{ij})$ be a matrix in $\mathbb{R}^{[m]\times [n]}$.
    Let $j$ be an integer in $[n]$.
    We say that $A$ can be eliminated at the column $j$ if it satisfies at least one of the following conditions.
    \begin{enumerate}[(i)]
        \item
        For all integers $i \in [m]$, if $ a_{ij} > 0$, then $a_{ij^{\prime}}=0$ for all integers $j^{\prime} \in [n] \setminus \{j\}$.
        \item
        For all integers $i \in [m]$, if $ a_{ij} < 0$, then $a_{ij^{\prime}}=0$ for all integers $j^{\prime} \in [n]\setminus \{j\}$.
    \end{enumerate}
\end{definition}

\begin{definition}[Eliminated matrix]
    Let $A$ be a matrix in $\mathbb{R}^{[m]\times [n]}$.
    Let $J$ be a subset of $[n]$.
    We define the eliminated matrix $\elm(A,J) \in \mathbb{R}^{ [m] \times ([n] \setminus J) }$ as the matrix obtained from $A$ by eliminating the $j$th column for all integers $j \in J$.
    We call the matrix $\elm(A,J)$ the eliminated matrix of $A$ by $J$.
\end{definition}

\begin{definition}[Elimination ordering]
    Let $A$ be a matrix in $\mathbb{R}^{[m]\times [n]}$.
    Let $S = (j_1,j_2,\dots, j_n)$ be a sequence of 
    integers in $[n]$.
    Then $S$ is called an elimination ordering {\rm(EO)} of $A$ if, for all integers $t \in [n]$, 
    $\elm(A,\{j_1,j_2,\dots,j_{t-1}\})$ can be eliminated at $j_t$.
\end{definition}

Finally, we define the sign function 
as follows.

\begin{definition}[Sign function]
    For all real numbers $x \in \mathbb{R}$, the sign function $\sgn : \mathbb{R} \to \{-1,0,1\}$ is 
    defined 
    as follows. 
    If $x < 0$, then we define $\sgn(x) := -1$. 
    If $x = 0$, then we define $\sgn(x) := 0$. 
    If $x > 0$, then we define $\sgn(x) := 1$. 
\end{definition}

\subsection{Our contribution}

In this paper, we prove following theorems.

\begin{theorem}\label{thm:main1}
    Suppose that $m \ge 4$ and $n \ge 3$. 
    Then there exists a matrix $A \in \mathbb{R}^{[m]\times [n]}$ satisfying 
    the following conditions.
    (i) $A$ does not have an EO.
    (ii) For all vectors $b \in \mathbb{R}^{[m]}$, 
    the solution graph $G(R(A,b))$ is connected.
\end{theorem}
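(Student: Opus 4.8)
The plan is to prove the theorem by first constructing a single $4 \times 3$ ``core'' matrix with the two required properties, and then lifting it to arbitrary $m \ge 4$ and $n \ge 3$ by padding with zero rows and zero columns. For the lifting I would take $A$ to be the $m \times n$ matrix whose top-left $4 \times 3$ block equals the core $A_0$ and whose remaining entries are all $0$. Two observations make this padding harmless. First, a zero column is trivially eliminable while a zero row never affects whether any column is eliminable; hence if no column of $A_0$ is eliminable, then the only eliminable columns of $A$ are the added zero columns, and since the three core columns can never be eliminated---their positive and negative entries are always entangled with another core column, which is likewise never removable---an EO would have to eliminate every column and so cannot exist. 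Second, the zero rows contribute only the constraints $0 \ge b_i$, which are either vacuous or render $R(A,b)$ empty, while each zero column corresponds to an unconstrained variable whose value may be changed arbitrarily in a single Hamming step; consequently $R(A,b)$ is either empty or a Cartesian product $R(A_0,b') \times D^{\,n-3}$, whose solution graph is connected whenever $G(R(A_0,b'))$ is. Thus everything reduces to the core.

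For the core I would take
\[
A_0 = \begin{pmatrix} 1 & 0 & 1 \\ -1 & 0 & 1 \\ 0 & 1 & -1 \\ 0 & -1 & -1 \end{pmatrix}.
\]
Checking that no column of $A_0$ is eliminable is a finite, routine verification from Definition \ref{def:elimination}: in each column both a positive entry and a negative entry occur in rows that also carry a nonzero in another column (column $3$ for the first two columns, and columns $1$ and $2$ respectively for the third), so neither condition (i) nor (ii) of elimination holds for any column. Since elimination can never even begin, $A_0$ has no EO, which gives property (i).

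The substance of the proof, and the step I expect to be the main obstacle, is establishing property (ii): that $G(R(A_0,b))$ is connected for every $b \in \mathbb{R}^{[4]}$. The structural feature I would exploit is that columns $1$ and $2$ have disjoint supports (their nonzeros sit in rows $\{1,2\}$ and $\{3,4\}$ respectively), so the variables $x_1$ and $x_2$ never appear together in a constraint and interact only through $x_3$. Slicing the feasible region by the value $x_3 = c$, the first two rows confine $x_1$ to an interval $I_1(c)$ and the last two rows confine $x_2$ to an interval $I_2(c)$, independently; hence each nonempty slice is a product of two integer intervals and is therefore connected. Moreover $I_1(c)$ is nondecreasing in $c$ while $I_2(c)$ is nonincreasing in $c$ (the lower and upper bounds move monotonically as $c$ varies), so the set of $c$ giving a nonempty slice is a contiguous range. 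To connect consecutive slices I would, starting from a feasible point with $x_3 = c$, first slide $x_2$ into $I_2(c+1) \subseteq I_2(c)$ (a feasible single-coordinate move) and then raise $x_3$ to $c+1$, which keeps $x_1$ feasible because $I_1(c) \subseteq I_1(c+1)$. Chaining these moves connects the entire region.

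The delicate points in this last argument are precisely where the box $D = \{0,\dots,d\}$ truncates the intervals: I must confirm that the truncated intervals $I_1(c)$ and $I_2(c)$ remain monotone in $c$ and that the nonempty slices genuinely form a single contiguous block of values of $c$, rather than breaking into disconnected pieces. This is where the interplay of the four rows with $m \ge 4$ is essential---the analogous ``cyclic'' phenomenon with fewer rows is exactly what Theorem \ref{thm:main2} rules out. I expect this monotonicity-and-contiguity bookkeeping, rather than the elimination check or the padding, to be where the real care is needed.
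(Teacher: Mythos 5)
Your construction is, up to permuting columns, exactly the matrix the paper uses: your $A_0$ is the paper's matrix from \eqref{eq:thm1} with the ``central'' column (the one with signs $(+,+,-,-)$ hitting all four rows) placed third instead of first, and the padding by zero rows and columns is the paper's lifting step as well (you in fact justify the padding more explicitly than the paper does). The only real divergence is in how connectedness is established. The paper exhibits, for any two feasible $s,t$ with $s_1\ge t_1$, a single explicit three-edge path $s \to (s_1,t_2,s_3) \to (t_1,t_2,s_3) \to t$ and checks the four inequalities at each intermediate vertex directly. You instead slice the feasible region by the value $c$ of the central variable, observe that each slice is a product of two integer intervals $I_1(c)\times I_2(c)$ with $I_1$ nondecreasing and $I_2$ nonincreasing in $c$, and chain consecutive slices. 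Your monotonicity claims do hold (truncation by $D$ is intersection with a fixed interval, which preserves the nesting, and nonemptiness of $I_1$ and $I_2$ is then an up-set and a down-set in $c$ respectively, so the nonempty slices form a contiguous block), so the argument goes through; the bookkeeping you flag as delicate is genuinely fine. The comparison is that your slicing viewpoint makes the structural reason for connectedness more visible, while the paper's observation is that the very same monotonicity lets one jump the central variable from $s_1$ to $t_1$ in one Hamming step rather than unit steps, yielding a uniformly short path and a proof that is just four lines of inequalities. Both are correct; I would only ask you to actually write out the two monotonicity inclusions $I_1(c)\subseteq I_1(c+1)$ and $I_2(c+1)\subseteq I_2(c)$ from the four rows, since that is the entire content of property (ii).
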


\begin{theorem}\label{thm:main2}
    Let $A$ be a matrix in $\mathbb{R}^{[m]\times [n]}$.
    Suppose that, for all vectors $b \in \mathbb{R}^{[m]}$, the solution graph $G(R(A,b))$ is connected.
    Then if at least one of $m \le 3$ and $n \le 2$ holds, then 
    $A$ has an EO. 
\end{theorem}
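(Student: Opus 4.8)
The plan is to prove the contrapositive: assuming that $A$ has no EO and that $m \le 3$ or $n \le 2$, I will exhibit a vector $b$ for which $G(R(A,b))$ is disconnected. Two reductions organize everything. First, since eliminability at a column depends only on which entries are zero, positive, or negative—and is symmetric under $+\leftrightarrow-$—the existence of an EO depends only on $\sgn(A)$ and is unaffected by negating any single column. Moreover, negating column $j$ together with the substitution $x_j \mapsto d-x_j$ is a solution-graph isomorphism carrying a disconnecting $b$ to a disconnecting $b'$; hence both the hypothesis and the conclusion are invariant under column sign flips, so I may normalize signs freely. Second, because $A$ has no EO, a maximal run of greedy eliminations cannot empty the matrix (the order of eliminated columns would otherwise be an EO), so it terminates at a nonempty \emph{stuck} submatrix $A'=\elm(A,J)$ on columns $C=[n]\setminus J$ in which no column is eliminable; unwinding the two failing conditions shows every column of $A'$ carries a positive and a negative entry, each lying in a row with a further nonzero entry of $A'$.

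For $n\le 2$ no column of $A$ can be eliminable (else, after removing one column the remaining single column is eliminable by the vacuous condition, yielding an EO), so necessarily $n=2$, $J=\emptyset$, and $A$ is itself stuck. After sign normalization $A$ then contains two rows whose restriction to the two columns is $(+,+)$ and $(-,-)$; write these as $p\cdot x\ge\beta$ and $q\cdot x\le\gamma$ with $p,q>0$, and make all other rows slack by taking their entries of $b$ sufficiently negative. I will choose $\beta,\gamma$ so that the feasible subset of $D^2$ is disconnected in the rook graph on $\{0,\dots,d\}^2$. The witness shape depends on the magnitudes: for balanced $p,q$ one isolates the two points $(1,0)$ and $(0,1)$, whereas for unbalanced $p,q$ the feasible set is a segment along one axis together with a separate isolated point off it. Selecting the correct thresholds therefore requires a short case analysis on the ratios of the entries of $p$ and $q$, and this two-column construction serves as the base case for the next part.

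For $m\le 3$ the stuck submatrix $A'$ has at most three rows, so although it may have many columns, each is one of boundedly many sign types (each bearing both a $+$ and a $-$), and I will classify the stuck patterns up to the column-flip symmetry and row permutation. The guiding representatives are a two-column sub-core of type $(+,+)/(-,-)$ and a three-column cyclic core realizing, after normalization, a rotation like $(+,-,0),(0,+,-),(-,0,+)$. In the two-column case I reduce to the base construction once the eliminated variables in $J$ are pinned. In the cyclic case I choose $b$ so that the three cyclic inequalities become simultaneously tight, which collapses feasibility onto a diagonal $\{(t,t,\dots,t,\ast)\}$ of pairwise non-adjacent solutions; any appended eliminable column must be simultaneously pinned to an endpoint of $D$ by a suitable nonnegative combination of the stuck rows, so that it cannot reconnect the witnesses. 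With $d\ge 1$ each collapsed feasible set contains at least two mutually non-adjacent solutions, giving the required disconnection.

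The main obstacle is the interaction between arbitrary real coefficients and integrality. A stuck core does not by itself force disconnection: appending an eliminable column can relax a binding inequality and reconnect the graph, as a direct computation on the cyclic core plus a one-sided column shows. Thus the heart of the argument is to choose $b$ so that a nonnegative row combination both pins the eliminated variables at the boundary of $D$ and drives the core inequalities to equality; the very existence and shape of such a combination, and the verification that the surviving feasible set is genuinely rook-disconnected, depend on the coefficient magnitudes and not merely on the sign pattern. I therefore expect the bulk of the technical effort—and the reason the hypothesis $m\le 3$ (or $n\le 2$) is imposed—to lie in making the finite classification of stuck $m\le 3$ cores explicit and carrying out, case by case and ratio by ratio, the coefficient-sensitive choice of $b$ that realizes the disconnection.
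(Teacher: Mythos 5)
Your overall architecture matches the paper's: argue the contrapositive, greedily eliminate columns down to a nonempty ``stuck'' core (the paper's Algorithm~\ref{alg} and Expansion Lemma, Lemma~\ref{lem:expansion_lemma}), and then split into a two-opposite-rows core versus the $3\times 3$ cyclic core with $a_{12}=a_{23}=a_{31}=0$ (the paper's $\Lambda_{i_1,i_2}$ analysis). But as written the proposal has two genuine gaps. First, the claim that the eliminated columns ``cannot reconnect the witnesses'' is exactly the content of the Expansion Lemma, and your proposed mechanism --- pinning each eliminated variable to an endpoint of $D$ by a nonnegative combination of the stuck rows --- is neither established nor what is actually needed. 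The paper does not pin these variables at all: it chooses $b^e_i=\sum_{k\in E}a^e_{ik}d(1-x^e_k)$ so that, for every row $i$ having a nonzero entry in the core, the eliminated variables' total contribution is bounded above by $b^e_i$ for \emph{every} $\zeta\in D^{E}$ (Proposition~\ref{prop:EO2}); hence an infeasible core point stays infeasible however the eliminated coordinates are set, and any path in $G(I)$ projects to a walk in $G(I^r)$. You assert the conclusion but supply no argument of this kind.

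Second, your base case is too small. In the $m\le 3$ branch, when some $|\Lambda_{i_1,i_2}|\ge 2$ the relevant object is a $2\times n$ matrix in which \emph{every} column of the stuck part carries one positive and one negative entry; none of those extra columns is eliminable, so they cannot be absorbed by the core-reduction step, and they genuinely enter both constraints. Your explicit construction is described only for two columns ($n\le 2$), so the reduction ``to the base construction once the eliminated variables in $J$ are pinned'' does not go through: the offending columns are not in $J$. The paper's Lemma~\ref{lemma:alldm=2} handles the general stuck $2\times n$ matrix by sorting each row's entries by magnitude and building $b$, $p$, $q$ so that every Hamming neighbor of $q$ violates row $1$ or row $2$; this is precisely the coefficient-sensitive work you defer. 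Likewise, for the cyclic $3\times3$ core, making the three inequalities simultaneously tight collapses feasibility onto a diagonal only for unit coefficients; for general reals the paper instead exhibits a single isolated feasible point via Proposition~\ref{prop:a(1-2s)=|a|_3}. So the plan identifies the correct case structure but leaves the load-bearing constructions unproven, and the one reduction it does specify is flawed.
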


\section{Proof of Theorem \ref{thm:main1}}
First, we prove following lemma.
This lemma plays an important role 
in the proof of Theorem~ \ref{thm:main1}.

\begin{lemma}\label{lem:m=4,n=3}
    There exists a matrix $A \in \mathbb{R}^{[4]\times [3]}$ satisfying 
    the following conditions.
    (i) The matrix $A$ does not have an EO.
    (ii) For all vectors $b \in \mathbb{R}^{[4]}$, 
    the solution graph $G(R(A,b))$ is connected.
\end{lemma}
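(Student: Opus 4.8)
The plan is to exhibit an explicit $4\times 3$ matrix $A$ and verify the two conditions directly. For condition (i), I would write down a small matrix whose sign pattern forces every column to resist elimination: concretely, I want each column $j$ to have both a strictly positive entry in some row and a strictly negative entry in some row, arranged so that in each of those rows a \emph{different} column is also nonzero. This blocks both clause (i) and clause (ii) of Definition~\ref{def:elimination} for every column, and I must check that this remains true after any single elimination step (i.e.\ that no eliminated submatrix suddenly becomes eliminable at a surviving column), so that $A$ has no EO at all. The natural candidate is a cyclic sign pattern on the three columns spread across four rows, for instance with rows realizing the ``conflicts'' $\{1,2\}$, $\{2,3\}$, $\{3,1\}$ and one extra row, chosen so that positives and negatives alternate around the cycle.

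The substantive work is condition (ii): showing $G(R(A,b))$ is connected for \emph{every} right-hand side $b\in\mathbb{R}^{[4]}$. My plan is to fix $b$ and take two arbitrary feasible solutions $x,y\in R(A,b)$, then construct a walk between them in the solution graph by changing one coordinate at a time, always staying feasible. The key lemma I would aim for is a \emph{monotone reachability} statement: from any feasible $x$ one can reach, by single-coordinate feasible moves, a canonical feasible solution (for example the coordinatewise-maximal feasible solution, or some fixed ``sink''). If every feasible solution can be connected to one common canonical solution, then $x$ and $y$ are connected to each other via that hub, giving connectedness of $G(R(A,b))$. To make single-coordinate increases always feasible, I would exploit that $D=\{0,\dots,d\}$ is an interval and that increasing a variable $x_j$ can only violate constraints through rows where $a_{ij}<0$; the specific sign structure of $A$ should let me argue that whenever $x$ is not yet maximal, some coordinate can be increased (or decreased) by $1$ without breaking any constraint $A x\ge b$.

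Executing that reachability argument is where I expect the main obstacle. For a general matrix there is no reason single-coordinate monotone moves preserve feasibility, so connectedness really hinges on the chosen sign pattern; I will need a case analysis over which constraints are tight at $x$ and use the cyclic structure to show a ``free'' coordinate always exists. The likely technical heart is ruling out a stuck feasible solution $x\neq y$ with $x\le y$ (say) such that no intermediate single step is feasible: here I would argue that if coordinate $j$ cannot be increased because some row $i$ with $a_{ij}<0$ is tight, then the cyclic arrangement forces another coordinate to be increasable instead, and I would track a potential function such as $\sum_j x_j$ to guarantee termination. Because the matrix is fixed and tiny, I expect this case analysis to be finite and checkable, though somewhat tedious; the cleverness is entirely in choosing the $4\times 3$ sign pattern so that both the non-existence of an EO and the universal connectedness hold simultaneously, and I would verify both by direct inspection of the concrete matrix once it is written down.
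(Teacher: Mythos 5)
There is a genuine gap, and it is concentrated exactly where you defer the work. First, the candidate matrix you sketch---a cyclic sign pattern whose rows realize the conflicts $\{1,2\}$, $\{2,3\}$, $\{3,1\}$---is the wrong shape: a $3\times 3$ matrix with exactly one conflicting column pair per pair of rows is precisely the configuration for which the paper's three-row analysis (Lemma~\ref{lem:m=3}, the case $n=3$ with $|\Lambda_{i_1,i_2}|=1$ for all pairs) \emph{constructs} a right-hand side $b$ making the solution graph disconnected; appending a fourth row cannot repair this, since one may take $b_4$ so small that the fourth constraint is vacuous. So the sign pattern you propose would satisfy (i) but provably fail (ii). The matrix the paper uses, given in \eqref{eq:thm1}, has a different structure: column $1$ is a hub whose \emph{positive} entries occur only in rows where column $2$ is the other nonzero (with both signs of column $2$ represented), and whose \emph{negative} entries occur only in rows where column $3$ is the other nonzero; columns $2$ and $3$ never appear together in a row. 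That decoupling is what blocks every elimination in Definition~\ref{def:elimination} while still permitting connectivity.

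Second, your connectivity strategy (monotone unit steps toward a coordinatewise-maximal ``canonical'' solution) does not fit the geometry and is left entirely unexecuted. For any matrix without an EO, some row pushes a hub coordinate up and another pushes it down, so there is no coordinatewise-maximal feasible solution to serve as a hub, and ruling out ``stuck'' solutions is exactly the hard part you acknowledge but do not resolve. The paper avoids all of this: given feasible $s,t$ with (WLOG) $s_1\ge t_1$, it uses the three-edge path $s\to(s_1,t_2,s_3)\to(t_1,t_2,s_3)\to t$, where each step changes one coordinate by an arbitrary amount (Hamming distance $1$ suffices for an edge, so unit increments are an unnecessary restriction). Feasibility of the intermediate points is a four-line check: rows $1,2$ only improve when $x_1$ increases from $t_1$ to $s_1$, and rows $3,4$ only improve when $x_1$ decreases from $s_1$ to $t_1$, so setting coordinate $2$ to $t_2$ first and coordinate $3$ to $t_3$ last keeps every constraint satisfied. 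Without the correct sign pattern and this ordering idea, your proposal does not yet constitute a proof.
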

\begin{proof} 
We define the matrix $A$ as follows.
\begin{equation}\label{eq:thm1}
    A = 
    \left(
    \begin{matrix}
        1 & 1 & 0 \\
        1 & -1 & 0\\
        -1 & 0 & 1 \\
        -1 & 0 & -1 
    \end{matrix}
    \right).
\end{equation}
It is not difficult to see that $A$ does not have an EO.

Suppose that $R(I)$ is not empty.
We take arbitrary vectors $b \in \mathbb{R}^{[4]}$ and $s,t \in R(I)$.
Without loss of generality, we suppose that $s_1 \ge t_1$.
We take the path $P$ from $s$ to $t$ defined by
\[
    s = 
    \begin{pmatrix}
        s_1 \\
        s_2 \\
        s_3
    \end{pmatrix}
    \to
    u^1 = 
    \begin{pmatrix}
        s_1 \\
        t_2 \\
        s_3
    \end{pmatrix}
    \to
    u^2 = 
    \begin{pmatrix}
        t_1 \\
        t_2 \\
        s_3
    \end{pmatrix}
    \to
    t = 
    \begin{pmatrix}
        t_1 \\
        t_2 \\
        t_3
    \end{pmatrix}.
\]
We prove that $u^1 \in R(I)$
because 
\begin{align*}
    s_1 + t_2 \ge t_1 + t_2 &\ge b_1, \ \ 
    s_1 - t_2 \ge t_1 - t_2 \ge b_2, \\
    - s_1 + s_3 &\ge b_3, \ \
    - s_1 - s_3 \ge b_4.
\end{align*}
We prove that $u^2 \in R(I)$
because 
\begin{align*}
    t_1 + t_2 &\ge b_1, \ \
    t_1 - t_2 \ge b_2, \\
    - t_1 + s_3 \ge - s_1 + s_3 &\ge b_3, \ \
    - t_1 - s_3 \ge - s_1 - s_3 \ge b_4.
\end{align*}
These imply that 
every vertex of $P$ is contained in 
$R(I)$. 
Thus, $G(I)$ is connected.
This completes the proof.
\end{proof}
\begin{proof}[Proof of Theorem \ref{thm:main1}]
Let $A$ be the matrix defined in \eqref{eq:thm1}.
If $m>4$ or $n>3$,
then we add rows and columns whose all elements are $0$ to $A$
until A becomes an $m \times n$ matrix.
Lemma \ref{lem:m=4,n=3} completes the proof.
\end{proof}

\section{Proof of Theorem \ref{thm:main2}}
First, we prove Lemma~\ref{lem:expansion_lemma}, which 
we call Expansion Lemma.
Expansion Lemma means that the columns which can be eliminated have nothing to do with the connectedness of the solution graph.

\subsection{Expansion Lemma}\label{section:expansion_lemma}
Let $A = (a_{ij})$ be a matrix in $\mathbb{R}^{[m] \times [n]}$.
Suppose that $A$ does not have an EO.
We define the subsets $\Delta,\,E \subseteq [n]$ as the output of Algorithm \ref{alg}.
We define the matrix $A^r = (a^r_{ij})$ as the submatrix of $A$ 
whose index set of columns is $\Delta$.
Similarly, we define the matrix $A^e = (a^e_{ij})$ as the submatrix of $A$ 
whose index set of columns is $E$.

\begin{algorithm}[h]
\caption{Algorithm for defining $\Delta$ and $E$.}
\label{alg}
\begin{algorithmic}[1]    
\STATE  $\Delta \leftarrow \emptyset$,~$E \leftarrow \emptyset$
\WHILE{$\elm(A,E)$ can be eliminated at some column}
\STATE Find an index $j \in [n] \setminus E$ at which the matrix $\elm(A,E)$ can be eliminated.
\STATE $E \leftarrow E \cup \{ j \}$ 
\ENDWHILE
\STATE $\Delta \leftarrow [n] \setminus E$
\STATE Output $\Delta$, $E$
\end{algorithmic}
\end{algorithm}

\begin{lemma}[Expansion Lemma]\label{lem:expansion_lemma}
    Suppose that there exists a vector $b^r \in \mathbb{R}^{[m]}$ 
    such that the solution graph $G(I^r)$ of the ILS  $I^r = (A^r,b^r)$ is not connected.
    Then there exists a vector $b \in \mathbb{R}^{[m]}$ 
    such that the solution graph $G(I)$ of the ILS  $I = (A,b)$ is not connected.
\end{lemma}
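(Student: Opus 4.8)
The plan is to reduce the Expansion Lemma to a single-column statement and then chain it along the greedy elimination sequence produced by Algorithm~\ref{alg}. Write $E=\{j_1,\dots,j_k\}$ in the order in which the indices are added, put $A_0:=A$ and $A_t:=\elm(A,\{j_1,\dots,j_t\})$, so that $A_k=A^r$ and, by construction, each $A_{t-1}$ can be eliminated at the column $j_t$. It then suffices to prove the following one-step claim: if $B\in\mathbb{R}^{[m]\times[p]}$ can be eliminated at a column $j$, if $B':=\elm(B,\{j\})$, and if $G(R(B',b'))$ is disconnected for some $b'$, then $G(R(B,b))$ is disconnected for some $b$. Applying this claim with $(B,B')=(A_{t-1},A_t)$ for $t=k,k-1,\dots,1$ propagates disconnectedness from $A^r$ back to $A=A_0$, which is exactly what the Expansion Lemma asserts. (Here $\Delta\neq\emptyset$ because $A$ has no EO, so the hypothesis does apply to $A^r$, and every $A_t$ with $t\le k$ retains at least the columns in $\Delta$; the right-hand side stays in $\mathbb{R}^{[m]}$ since only columns are removed.)

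For the one-step claim I would write a solution of $B$ as $(y,x_j)$, where $y\in D^{[p]\setminus\{j\}}$ is its restriction to the columns of $B'$. By the symmetry $x_j\mapsto d-x_j$, which negates column $j$ and induces a graph isomorphism on solution graphs, I may assume $B$ is eliminated at $j$ via condition~(i) of Definition~\ref{def:elimination}; thus every row $i$ with $B_{ij}>0$ is otherwise zero, hence becomes a zero row of $B'$. Fix $s',t'\in R(B',b')$ lying in distinct components of $G(R(B',b'))$. The target is to choose $b$ and a reference value $v^{\ast}\in D$ so that: (R1) the set of $y$ with $(y,v^{\ast})\in R(B,b)$ equals $R(B',b')$; and (R2) whenever $(y,v)\in R(B,b)$ for some $v\in D$, one has $y\in R(B',b')$.

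The construction would split on whether column $j$ has a positive entry. If some $B_{ij}>0$, I force $x_j=d$: for each row $i$ with $B_{ij}>0$ set $b_i:=B_{ij}\,d$, so the isolated constraint $B_{ij}x_j\ge b_i$ reads $x_j\ge d$ and pins $x_j=d$; for every other row set $b_i:=b'_i+B_{ij}\,d$, so at $x_j=d$ the constraint collapses to $(B'y)_i\ge b'_i$. Then $v^{\ast}=d$ is the only feasible level and $R(B,b)=\{(y,d):y\in R(B',b')\}$ (using that $R(B',b')\neq\emptyset$ forces $b'_i\le 0$ on the zero rows of $B'$), so $G(R(B,b))\cong G(R(B',b'))$ and we are done. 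If instead $B_{ij}\le 0$ for all $i$, forcing is impossible, so I take $v^{\ast}=0$ and simply set $b:=b'$. Since every $B_{ij}\le 0$, increasing $x_j$ only tightens each row, so level $0$ reproduces $R(B',b')$ while every higher level is a subset of it; this yields (R1) and (R2).

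It remains to explain why (R1)--(R2) force disconnectedness, and this is the main obstacle: a priori, the extra values of $x_j$ could create a path reconnecting the two components. I would rule this out by showing that the $G(R(B',b'))$-component of the $y$-coordinate is invariant along every path of $G(R(B,b))$. Indeed, an edge of $G(R(B,b))$ either changes $x_j$ by one while fixing $y$, leaving the component of $y$ literally unchanged, or fixes $x_j$ while changing one coordinate of $y$; in the latter case both endpoints are feasible, so by (R2) both $y$-values lie in $R(B',b')$ and differ in a single coordinate, hence are adjacent in $G(R(B',b'))$ and share a component. By (R1) the vertices $(s',v^{\ast})$ and $(t',v^{\ast})$ are feasible, and their $y$-coordinates lie in different components; by the invariance no path joins them, so $G(R(B,b))$ is disconnected. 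This proves the one-step claim, and the chaining above finishes the lemma. The only place the elimination hypothesis enters is in securing (R2): condition~(i) (symmetrically, condition~(ii)) is precisely the sign pattern of column $j$ that lets one reference level dominate all the others, and establishing (R2) is the crux of the argument.
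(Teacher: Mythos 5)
Your proof is correct, but it is organized quite differently from the paper's. The paper handles all of $E$ in one shot: it records for each eliminated column $k\in E$ which rule of Definition~\ref{def:elimination} applied (the vector $x^e$), sets the reference level $\zeta'_k=d(1-x^e_k)$ and the single offset $b:=b^r+b^e$ with $b^e_i=\sum_{k\in E}a^e_{ik}d(1-x^e_k)$, and then proves exactly your (R1) and (R2) as Propositions~\ref{prop:EO1} and~\ref{prop:EO2} before running the same projection argument you use (your ``component of the $y$-coordinate is invariant along every edge'' is the paper's path-projection step verbatim in different words). You instead peel off one column of $E$ at a time along the order produced by Algorithm~\ref{alg}, normalize rule~(ii) to rule~(i) by the reflection $x_j\mapsto d-x_j$, and split on whether the column has a positive entry; in the positive case you even get the stronger conclusion that $G(R(B,b))$ is isomorphic to $G(R(B',b'))$ because the new variable is pinned to $d$. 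What your route buys is a cleaner, more local statement (a single-column lemma plus induction) and the observation that only the ``no positive entry'' case genuinely needs the invariance argument; what the paper's route buys is avoiding the induction and the WLOG normalization by exhibiting the final right-hand side $b$ explicitly, at the cost of a slightly more delicate sign analysis in Proposition~\ref{prop:EO2} (there the fact that a row active on $\Delta$ cannot be the ``isolated'' row of any eliminated column does the work that your case split does). Both arguments are complete; no gap.
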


\begin{proof}
We define the vector $x^e \in D^{E}$ as follows.
\[
x^e_k := 
\left\{
\begin{aligned}
& 1 && \text{if the column vector $A_k$ is eliminated by the rule 
 (i) in Def.~\ref{def:elimination}} \\
& 0 && \text{otherwise}.
\end{aligned}
\right.
\]
With this vector $x^e$, we define the vector $b^e \in D^{[m]}$ by 
\[
b^e_i := \sum_{k \in E} a^e_{i k} d (1 - x^e_k) \quad (i \in [m]) .
\]

With the vectors $b^e$ and $b^r$, we define the vector $b$ by $b := b^r + b^e$.
We prove that the solution graph $G(I)$ of the ILS $I=(A,b)$ is not connected.
For each vector $z \in D^{\Delta}$ and each vector $\zeta \in D^{E}$, we define the vector $(z,\zeta) \in D^{[n]}$ by  
\[
(z,\zeta)_k :=
\left\{
\begin{aligned}
& z_k && (k \in \Delta) \\
& \zeta_k && (k \in E) .
\end{aligned}  
\right.
\]

\begin{proposition}\label{prop:EO1}
There exists a vector $\zeta^{\prime} \in D^{E}$ such that $(z,\zeta^{\prime}) \in R(I)$ for all feasible solutions $z \in R(I^r)$.
\end{proposition}
\begin{proof}
We define the vector $\zeta^{\prime} \in D^{E}$ by $\zeta^{\prime}_k := d(1-x^e_k)$
for all integers $k \in E$.

Since $z \in R(I^r)$, 
for all integers $i \in [m]$, we have $\sum_{k \in \Delta} a^r_{i k} z_{k} \ge b^r_i$.
Thus, for all integers $i \in [m]$, we have
\begin{align*}
\sum_{k \in [n]} a_{i k} (z,\zeta')_k - b_i 
&= \sum_{k \in \Delta} a^r_{i k} z_{k} - b^r_i
+ \sum_{k \in E} a^e_{i k} \zeta'_{k} - b^e_i \\
&\ge \sum_{k \in E} a^e_{i k} \zeta'_{k} - \sum_{k \in E} a^e_{ik} d (1 - x^e_{k}) = 0.
\end{align*}
Thus, we have $(z,\zeta^{\prime}) \in R(I)$.
This completes the proof.
\end{proof}

\begin{proposition}\label{prop:EO2}
    For all vectors $z \in D^{\Delta} \setminus R(I^r)$ and $\zeta \in D^{E}$, we have $(z,\zeta) \notin R(I)$.
\end{proposition}

\begin{proof}
Since $z$ is not a feasible solution in $R(I^r)$, 
there exists an integer $i \in [m]$ such that $\sum_{k \in \Delta} a^r_{i k} z_{k} < b^r_i$.

We prove that there exists an integer $j \in \Delta$ such that $a^r_{i j} \neq 0$.
Suppose that $a^r_{i j} = 0$ for all integers $j \in \Delta$.
For all vectors $z^{\prime} \in R(I^r)$,
$0 = \sum_{k \in \Delta} a^r_{i k} z^{\prime}_k \ge b^r_i$.
Therefore, we have 
$\sum_{k \in \Delta} a^r_{i k} z_{k} = 0 \ge b^r_i$.
It contradicts $\sum_{k \in \Delta} a^r_{i k} z_{k} < b^r_i$.
There exists an integer $j \in \Delta$ such that $a^r_{i j} \neq 0$.
We fix such an integer $j \in \Delta$.

We prove that, for all integers $k \in E$, $x^e_k = 1$ (resp. $x^e_k = 0$) 
implies $a^e_{i k} \le 0$ (resp. $a^e_{i k} \ge 0$).
Suppose that there exist an integer $k \in E$ such that $x^e_k = 1$ (resp. $x^e_k = 0$) 
and $a^e_{i k} > 0$ (resp. $a^e_{i k} < 0$).
Since $x^e_k = 1$ (resp. $x^e_k = 0$) and $k \in E$, 
the column vector $A_k$ is eliminated by the rule (i) (resp. (ii)) in 
Definition~\ref{def:elimination}.
Therefore, since $a^e_{i k} > 0$ (resp. $a^e_{i k} < 0$), for all integers 
$j^{\prime} \in [n] \setminus \{j\}$, we have $a_{i j^{\prime}} = 0$.
However, we have $a^r_{i j} \neq 0$ and $j \in \Delta \subseteq [n] \setminus \{j\}$.
This is a contradiction.
Thus, for all integers $k \in E$, $x^e_k = 1$ (resp. $x^e_k = 0$)  
implies $a^e_{i k} \le 0$ (resp. $a^e_{i k} \ge 0$).

We define $E^1$ (resp. $E^0$) as the set of integers $k \in E$ such that $x^e_k = 1$ (resp. $x^e_k = 0$).
For all integers $k \in E^1$, since $a^e_{i k} \le 0$, we have
\(
a^e_{i k} \zeta_{k} 
\le a^e_{i k} 0 
= a^e_{i k} d (1 - x^e_{k}).
\)
Similarly, for all integers $k \in E^0$, 
since $a^e_{i k} \ge 0$, we have
\(
a^e_{i k} \zeta_{k} 
\le a^e_{i k} d
= a^e_{i k} d (1 - x^e_{k}).
\)
Thus, for all integers $k \in E$, we have  $a^e_{i k} \zeta_{k} \le a^e_{i k} d (1 - x^e_{k})$.
We have
\begin{align*}
\sum_{k \in [n]} a_{i k} (z,\zeta)_k - b_i
&= \sum_{k \in \Delta} a^r_{i k} z_{k} - b^r_i
+\sum_{k \in E} a^e_{i k} \zeta_{k} - b^e_i 
< \sum_{k \in E} a^e_{i k} \zeta_{k} 
- b^e_i \\
&\le \sum_{k \in E} a^e_{i k} d (1 - x^e_{k})
- \sum_{k \in E} a^e_{i k} d (1 - x^e_{k}) =0,
\end{align*}
where 
the strict inequality follows from $\sum_{k \in \Delta} a^r_{i k} z_{k} < b^r_i$.
This completes the proof.
\end{proof}

We take vectors $p,q \in R(I^r)$ that are 
not connected on $G(I^r)$.
We take a vector $\zeta^{\prime} \in D^{E}$ satisfying the condition in
Proposition \ref{prop:EO1}.
We obtain $(p,\zeta^{\prime}), (q,\zeta^{\prime}) \in R(I)$.
We take an arbitrary path $P$ from $(p,\zeta^{\prime})$ to 
$(q,\zeta^{\prime})$ on $G(D^{[n]})$.
Let $(p,\zeta^{\prime}) = (u^{(0)},v^{(0)}) \to (u^{(1)},v^{(1)}) \to \cdots \to (u^{(\ell)},v^{(\ell)}) = (q,\zeta^{\prime})$ denote $P$.

Define the map $F^r : D^{[n]} \to D^{\Delta}$ by $F^r((z,\zeta)) := z$ for all vectors $(z,\zeta) \in D^{[n]}$.
Define the path $P^r$ as  
$F^r((u^{(0)},v^{(0)})) \to \cdots \to F^r((u^{(\ell)},v^{(\ell)}))$
on $G(D^{\Delta})$
($P^r$ may contain some duplicate vertices).
Since $p,q$ are not connected on $G(I^r)$, there exists a positive 
integer $k < \ell$ such that $F^r((u^{(k)},v^{(k)})) \notin R(I^r)$.

By Proposition \ref{prop:EO2}, $F^r((u^{(k)},v^{(k)})) \notin R(I^r)$ implies that,
for any vector $\zeta \in D^{E}$,
$(F^r((u^{(k)},v^{(k)})),\zeta) \notin R(I)$.
If we take $v^{(k)}$ as $\zeta$, then we have 
$(u^{(k)},v^{(k)}) = (F^r((u^{(k)},v^{(k)})),v^{(k)}) \notin R(I)$.
This implies that $P$ is not a path in $G(I)$.
Thus,
the solution graph $G(I)$ is not connected.
This completes the proof.
\end{proof}

\subsection{Two rows}
In this subsection, we consider the case where the coefficient matrix of an ILS consists of two rows.

\begin{proposition}[$\star$]\label{prop:sgna_1_eq_minus_sgna_2}
    Let $A=(a_{ij})$ be a matrix in $\mathbb{R}^{[2] \times [n]}$.
    Suppose that $A$ cannot be eliminated at any column. Then for all integers $j \in [n]$,
    $\sgn(a_{1 j}) = - \sgn(a_{2 j})$,
    $\sgn(a_{1 j}) \neq 0$, and 
    $\sgn(a_{2 j}) \neq 0$. 
\end{proposition}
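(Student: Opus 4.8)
The plan is to prove the contrapositive at the level of each conclusion. The hypothesis is that $A$ (a $2 \times n$ matrix) cannot be eliminated at any column. I want to show three things for every column $j$: the signs in the two rows are nonzero and opposite. My strategy is to fix an arbitrary column $j$ and unpack directly what the failure of elimination at $j$ forces about the entries $a_{1j}$ and $a_{2j}$, using the fact that with only two rows, the elimination conditions (i) and (ii) of Definition~\ref{def:elimination} become very restrictive.

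First I would recall that $A$ cannot be eliminated at column $j$ means that \emph{both} condition (i) and condition (ii) of Definition~\ref{def:elimination} fail at $j$. Since there are only two rows, I would spell out what each failure says. Condition (i) failing means there is some row $i$ with $a_{ij} > 0$ but $a_{ij'} \neq 0$ for some $j' \neq j$; condition (ii) failing is the symmetric statement with $a_{ij} < 0$. However, the cleaner route is to argue via the signs directly. I would first establish that neither entry in column $j$ can be zero. Suppose for contradiction that $a_{1j} = 0$. Then in checking condition (i), the only row that could have a positive entry in column $j$ is row $2$; checking condition (ii) symmetrically involves only the sign of $a_{2j}$. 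I expect that the vanishing of one entry collapses the two-row elimination test so that column $j$ becomes eliminable (one of the two conditions is vacuously or easily satisfied), contradicting the hypothesis. This step handles $\sgn(a_{1j}) \neq 0$ and, by symmetry, $\sgn(a_{2j}) \neq 0$.

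Next, knowing both entries are nonzero, I would rule out the case that they share the same sign. Suppose $a_{1j}$ and $a_{2j}$ are both positive (the both-negative case is symmetric). Then condition (i) examines all rows $i$ with $a_{ij} > 0$, which is \emph{both} rows, and requires all their other entries to vanish; condition (ii) examines rows with $a_{ij} < 0$, of which there are none, so condition (ii) is vacuously satisfied. But a vacuously satisfied condition (ii) means $A$ \emph{can} be eliminated at $j$, contradicting the hypothesis. Hence the two entries cannot both be positive, and symmetrically not both negative, so $\sgn(a_{1j}) = -\sgn(a_{2j})$.

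I do not anticipate a serious obstacle here; the whole proposition is a matter of carefully enumerating the two-row cases and observing that the restriction to $m = 2$ makes the elimination conditions degenerate whenever a column has a zero entry or two same-sign entries. The one point requiring care is the bookkeeping of the ``vacuously true'' branches: when one of the two sign-conditions in Definition~\ref{def:elimination} quantifies over an empty set of rows, it is automatically satisfied, which already permits elimination. Making this explicit is the crux, and once it is stated clearly the three conclusions follow immediately and in parallel.
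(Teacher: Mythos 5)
Your proposal is correct and rests on the same key observation as the paper's proof: if a column lacks a positive (resp.\ negative) entry, then condition (i) (resp.\ (ii)) of Definition~\ref{def:elimination} is vacuously satisfied and the column is eliminable, so non-eliminability forces each column to contain both a positive and a negative entry, which with only two rows pins down the signs exactly as claimed. The paper states this in two lines rather than as a per-conclusion contrapositive, but the substance is identical.
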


\begin{lemma}\label{lemma:alldm=2}
    Let $A=(a_{ij})$ be a matrix in $\mathbb{R}^{[2] \times [n]}$.
    Suppose that $A$ cannot be eliminated at any column.
    Then there exists a vector $b \in \mathbb{R}^{[2]}$ such that 
    the solution graph $G(I)$ of the ILS  $I = (A,b)$ is not connected.
\end{lemma}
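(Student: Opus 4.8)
My plan is to reduce to the case where every column is positive in row~1 and negative in row~2, and then to exhibit a single right-hand side vector for which one feasible vertex is \emph{isolated} while at least one other feasible vertex exists; an isolated vertex in a graph with $\ge 2$ vertices immediately witnesses disconnectedness.

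First I would invoke Proposition~\ref{prop:sgna_1_eq_minus_sgna_2}, which tells me that $\sgn(a_{1j}) = -\sgn(a_{2j}) \neq 0$ for every $j$, so each column is strictly of one of the two sign patterns. I would also record that $n \ge 2$: a single column can always be eliminated, since condition (i) of Definition~\ref{def:elimination} is vacuous when $n=1$, contradicting the hypothesis. For every column $j$ with $a_{1j} < 0 < a_{2j}$ I would apply the substitution $x_j \mapsto d - x_j$. This is an automorphism of $G(D^{[n]})$ (it relabels one coordinate and preserves Hamming distances) and only shifts $b$ by a constant, so it preserves connectedness of the solution graph; hence I may assume $a_{1j} =: \alpha_j > 0$ and $-a_{2j} =: \beta_j > 0$ for all $j$. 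Writing $f(x) = \sum_j \alpha_j x_j$ and $g(x) = \sum_j \beta_j x_j$, feasibility becomes $f(x) \ge b_1$ and $g(x) \le c$ with $c := -b_2$, and now all coefficients $\alpha_j,\beta_j$ are strictly positive.

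The decisive step is the choice $b_1 := \min_j \alpha_j$ and $c := \max_j \beta_j$. Let $b^{*}$ attain $\max_j \beta_j$. Then every unit vector $e_j$ is feasible, because $f(e_j) = \alpha_j \ge \min_i \alpha_i = b_1$ and $g(e_j) = \beta_j \le \max_i \beta_i = c$; since $n \ge 2$ this already gives at least two feasible vertices. I claim $e_{b^{*}}$ is isolated. Any neighbour arises by changing a single coordinate: lowering coordinate $b^{*}$ to $0$ yields $f = 0 < b_1$; raising coordinate $b^{*}$ to any larger value $v$ yields $g = v\beta_{b^{*}} > \beta_{b^{*}} = c$; and raising any other coordinate $k$ from $0$ to some $v \ge 1$ yields $g = \beta_{b^{*}} + v\beta_k > \beta_{b^{*}} = c$. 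Every neighbour violates a constraint, so $e_{b^{*}}$ has degree $0$. Thus $G$ is disconnected for the normalized system, and transporting this back through the coordinate flips produces the required $b$ for the original matrix.

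The step I expect to be the genuine crux is exactly the design of the right-hand side. The naive attempt to carve out ``exactly two non-adjacent feasible points'' runs into detours through the remaining $n-2$ columns, and into commensurability obstructions among the $\beta_j$ (for instance an exact level set $\{x : g(x) = \text{const}\}$ may contain only one lattice point, so it cannot host two separated solutions). The observation that dissolves all of this is that one need not make the feasible set small: it suffices to isolate a \emph{single} vertex. Choosing $c = \max_j \beta_j$ makes $e_{b^{*}}$ tight for row~2, so that switching on \emph{any} further coordinate overshoots $c$ at once; this is what forces isolation ``for free'', uniformly over the real magnitudes of the entries and over the fixed bound $d$.
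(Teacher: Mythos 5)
Your proof is correct, and while it shares the paper's high-level strategy --- construct $b$ so that some feasible vertex is isolated while another feasible vertex exists --- the construction itself is genuinely different and considerably leaner. The paper works directly with the signed entries: it sorts each row by absolute value, defines $b$ so that row~1 is tight at a point $q$ obtained from the ``extremal'' vector $(d(1-x_k))_k$ by perturbing the coordinate $j^1_2$ with the second-smallest $|a_{1j}|$, and then rules out every neighbour of $q$ through a branching case analysis (Cases 1.1--2.2.2.2, partly deferred to the appendix) in which the ``increase back to the extremal point'' direction must be killed by row~2 via the coordinate $j^2_1$. You instead normalize by the coordinate flips $x_j \mapsto d - x_j$ (a Hamming-distance-preserving automorphism of $D^{[n]}$ that only re-signs columns and shifts $b$, so it is legitimate), reducing to $\alpha_j := a_{1j} > 0$, $\beta_j := -a_{2j} > 0$, and then take $b_1 = \min_j \alpha_j$ and $-b_2 = \max_j \beta_j$: all unit vectors are feasible, and the one maximizing $\beta$ is isolated because any increase overshoots row~2 and the only decrease lands on the infeasible origin. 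Your explicit observation that $n \ge 2$ (a single column is always eliminable, since the elimination condition is vacuous for $n=1$) is a detail the paper leaves implicit but actually needs, since its construction uses two distinct indices $j^1_1, j^1_2$. What your route buys is the elimination of essentially all case analysis; what the paper's route buys is that it never leaves the original coordinates, which keeps the statement of $b$ explicit in terms of the given entries of $A$. Both are valid; yours would shorten the proof substantially.
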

\begin{proof}
We define the set of integer $\{ j^1_1 , \ldots , j^1_n\} = [n]$ (resp.\ $\{j^2_1 , \ldots , j^2_n\} = [n]$) 
by $|a_{1 j^1_k}|\le|a_{1 j^1_{k+1}}|$ (resp.\ $|a_{2 j^2_k}|\le|a_{2 j^2_{k+1}}|$) for all integers $k \in [n-1]$.
That is, we arrange the elements in each row 
in non-decreasing order. 

Define the vector $x \in \{ 0, 1 \}^{[n]}$ by
\[
   x_k  := 
   \left\{
   \begin{aligned}
       &0& &(a_{1 k} > 0) \\
       &1& &(a_{1 k} < 0).
   \end{aligned}
   \right.
\]
Notice that Proposition \ref{prop:sgna_1_eq_minus_sgna_2}
implies that $a_k \neq 0$.

We define the vector $b \in \mathbb{R}^{[2]}$ by 
\[
\left(
\begin{aligned}
b_1 \\
b_2
\end{aligned}
\right)
:=
\left(
\begin{aligned}
& \sum_{k \in [n] \setminus \{ j^1_2\} } a_{1k} d (1 - x_k)
+ a_{1 j^1_2} ((d - 1) (1 - x_{j^1_2}) + x_{j^1_2}) \\
& \sum_{k \in [n] \setminus \{ j^2_1\} } a_{2k} d (1 - x_k)
+ a_{2 j^2_1} ((d - 1) (1 - x_{j^2_1}) + x_{j^2_1})
\end{aligned}
\right) .
\]
Then we consider the ILS $I = (A,b)$.
We define the vectors $p,q \in D^{[n]}$ as follows. 
\begin{align*}
p_k :=
\left \{
\begin{aligned}
& d (1 - x_k) && (k \neq j^1_1) \\
& (d - 1) (1 - x_{k}) + x_{k} && ( k = j^1_1 ) ,
\end{aligned}
\right. 
q_k :=
\left \{
\begin{aligned}
& d (1 - x_k) && (k \neq j^1_2) \\
& (d - 1) (1 - x_{k}) + x_{k} && ( k = j^1_2 ) .
\end{aligned}
\right.
\end{align*}
We prove that the vectors $p,q$ belong to $R(I)$ and 
they are not connected on the solution graph $G(I)$.

\begin{proposition}[$\star$] \label{prop:pq_in_R(I)_ver2row}
The vectors $p,q$ belong to $R(I)$.
\end{proposition}

\begin{proposition}[$\star$] \label{prop:a(1-2s)=|a|_2row}
    For all integers $j \in [n]$, $a_{1 j}(1 - 2 x_{j}) = |a_{1 j}|$ and
    $a_{2 j}(1 - 2 x_{j}) = - |a_{2 j}|$.
\end{proposition}

We define $Y$ as the set of vectors $y \in D^{[n]}$ 
such that $\dist(q,y) = 1$.
In other words, the subset $Y$ is the set of 
neighborhood vertices of $q$ on $G(D^{[n]})$.
Then we prove that $y \notin R(I)$ for all vectors $y \in Y$.

We arbitrarily take a vector $y \in Y$.
From the definition, the following equation is obtained for the vector $y$.
\[
    y_k = 
    \left\{
    \begin{aligned}
    & q_k && (k \neq j) \\
    & \xi && (k = j) ,
    \end{aligned}
    \right.
\]
where $j$ is an integer in $[n]$ and $\xi$ is an integer in $D$ such that $\xi \neq q_j$.
See Appendix~\ref{appendix:diagram}
for the case distinction of the following proof.

\noindent \textbf{Case 1 ($j \neq j^1_2$).}
If $j \neq j^1_2$, then we have
\begin{align*}
\sum_{k\in [n]} a_{1 k} y_k - b_1
&= \sum_{k \in [n] \setminus \{ j \}}  a_{1k} q_k + a_{1j}\xi - b_1 \\ 
&= \sum_{k \in [n] \setminus \{ j,j^1_2 \}}a_{1k} d (1 - x_k)
+ a_{1 j^1_2} ((d-1)(1-x_{j^1_2})+x_{j^1_2}) + a_{1j} \xi  \\
&\quad - \left( \sum_{k \in [n] \setminus \{ j^1_2\} } a_{1k} d (1 - x_k)
+ a_{1 j^1_2} ((d - 1) (1 - x_{j^1_2}) + x_{j^1_2}) \right) \\
&=a_{1 j}(\xi - d(1-x_j)).
\end{align*}

\noindent \textbf{Case 1.1 ($a_{1j} > 0$).}
If $a_{1 j}>0$, then $x_j = 0$.
Since $q_j = d(1-x_j) = d$, the inequality 
$0 \le \xi \le d-1$
is obtained.
We have
\[
    \sum_{k\in [n]} a_{1 k} y_k - b_1 =
    a_{1 j}(\xi - d(1-x_j)) 
    \le  a_{1 j}((d - 1) - d) = - a_{1 j} < 0. 
\]

\noindent \textbf{Case 1.2 ($a_{1j} < 0$).}
If $a_{1 j}<0$, then $x_j = 1$.
Since $q_j = d(1-x_j) = 0$, the inequality 
$1 \le \xi \le d$
is obtained.
We have
\[
    \sum_{k\in [n]} a_{1 k} y_k - b_1 = 
    a_{1 j}(\xi - d(1-x_j)) 
    \le a_{1 j}(1 - 0) = a_{1 j} < 0 .
\]

\noindent \textbf{Case 2 ($j = j^1_2$).}
If $j = j^1_2$, then we have
\begin{align*}
\sum_{k\in [n]} a_{1 k} y_k - b_1 
&= \sum_{k \in [n] \setminus \{j\}} a_{1 k} q_k + a_{1 j} \xi \\
&\quad - \left( \sum_{k \in [n] \setminus \{ j\} } a_{1k} d (1 - x_k)
+ a_{1 j} ((d - 1) (1 - x_{j}) + x_{j}) \right) \\
&=a_{1 j}(\xi - ((d - 1) (1 - x_{j}) + x_{j} )).
\end{align*}

\noindent \textbf{Case 2.1 ($a_{1 j}>0$).}
If $a_{1 j}>0$, then $x_{j} = 0$.
Thus, $q_j=(d - 1) (1 - x_{j}) + x_{j} = d-1$.
Therefore, either $0 \le \xi \le d-2 $ or $\xi=d$ is satisfied.
We have
\begin{align*}
\sum_{k\in [n]} a_{1 k} y_k - b_1 
= a_{1 j}(\xi -( (d - 1) (1 - x_{j}) + x_{j} ) )
= a_{1 j}(\xi - (d - 1)).
\end{align*}

\noindent \textbf{Case 2.1.1 ($0 \le \xi \le d-2$).}
If $0 \le \xi \le d-2$, then we have
\begin{align*}
\sum_{k\in [n]} a_{1 k} y_k - b_1 
= a_{1 j}(\xi - (d - 1)) 
\le a_{1 j}((d-2) - (d - 1)) = - a_{1 j} < 0 .
\end{align*}

\noindent \textbf{Case 2.1.2 ($\xi=d$).}
If $\xi=d$, then 
we consider $b_2$.

\noindent \textbf{Case 2.1.2.1 ($j \neq j^2_1$).}
By Proposition \ref{prop:a(1-2s)=|a|_2row},
if $j \neq j^2_1$, then we have
\begin{align*}
&\sum_{k\in [n]} a_{2 k} y_k - b_2 \\
&= \sum_{k \in [n]} a_{2k} y_k
- \left( \sum_{k \in [n] \setminus \{ j^2_1\} } a_{2k} d (1 - x_k)
+ a_{2 j^2_1} ((d - 1) (1 - x_{j^2_1}) + x_{j^2_1}) \right) \\
&= a_{2 j}(\xi - d(1 - x_j) ) 
+ a_{2 j^2_1}(y_{j^2_1} - ((d - 1) (1 - x_{j^2_1}) + x_{j^2_1} ))  \\
&= a_{2 j}(d - d(1 - 0) ) 
+ a_{2 j^2_1}(q_{j^2_1} - ((d - 1) (1 - x_{j^2_1}) + x_{j^2_1} ) ) \\
&= a_{2 j^2_1}(d(1 - x_{j^2_1}) - ((d - 1) (1 - x_{j^2_1}) + x_{j^2_1} ) ) \\
&= a_{2 j^2_1}(1-2x_{j^2_1}) = -|a_{2 j^2_1}| < 0 .
\end{align*}

\noindent \textbf{Case 2.1.2.2 ($j = j^2_1$).}
If $j = j^2_1$, then 
$a_{2 j} < 0$ follows from $a_{1 j} > 0$
and Proposition \ref{prop:sgna_1_eq_minus_sgna_2}.
\begin{align*}
&\sum_{k\in [n]} a_{2 k} y_k - b_2 \\
&= \sum_{k \in [n]} a_{2k} y_k
- \left( \sum_{k \in [n] \setminus \{ j\} } a_{2k} d (1 - x_k)
+ a_{2 j} ((d - 1) (1 - x_{j}) + x_{j}) \right) \\
&= a_{2 j}(\xi - ((d - 1) (1 - x_{j}) + x_{j} ))  \\
&= a_{2 j}(d - ((d - 1) (1 - 0) + 0 ))  = a_{2 j} < 0 .
\end{align*}
See Appendix~\ref{appendix:remaining} 
for the remaining part of the proof.

We obtain $y \notin R(I)$ for all the cases.
Therefore, any neighborhood vertex 
of $q$ on $G(D^{[n]})$ is not a feasible solution.
This completes the proof. 
\end{proof}

\begin{lemma}\label{lem:alldm=2EO}
    Let $A = (a_{ij})$ be a matrix in $\mathbb{R}^{[2] \times [n]}$.
    Suppose that $A$ does not have an EO.
    Then there exists a vector $b \in \mathbb{R}^{[2]}$ such that the solution graph $G(I)$ of the ILS  $I = (A,b)$ is not connected.
\end{lemma}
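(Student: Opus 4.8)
The plan is to derive this lemma as a short corollary of the two results already proved in this section: the Expansion Lemma (Lemma~\ref{lem:expansion_lemma}) and Lemma~\ref{lemma:alldm=2}. The underlying idea is that Algorithm~\ref{alg} strips away exactly the columns that can be successively eliminated, leaving a core submatrix $A^r$ on the columns $\Delta$ that cannot be eliminated at any column; Lemma~\ref{lemma:alldm=2} disconnects $A^r$, and the Expansion Lemma lifts this disconnectedness back to $A$.

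First I would run Algorithm~\ref{alg} on $A$ to obtain the sets $E$ and $\Delta = [n] \setminus E$, together with the submatrices $A^e$ and $A^r = \elm(A,E)$. The key observation is that $\Delta \neq \emptyset$. Indeed, if $\Delta$ were empty, then $E = [n]$, and the sequence $(j_1,\dots,j_n)$ in which the columns were inserted into $E$ would satisfy, by the test performed by the \textbf{while} loop at each insertion, the property that $\elm(A,\{j_1,\dots,j_{t-1}\})$ can be eliminated at $j_t$ for every $t$; this is precisely the definition of an EO of $A$, contradicting the hypothesis that $A$ has no EO. Hence $\Delta \neq \emptyset$.

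Next, by the termination condition of the \textbf{while} loop, the matrix $A^r = \elm(A,E)$ cannot be eliminated at any column. Since a matrix with a single column can always be eliminated (both conditions (i) and (ii) of Definition~\ref{def:elimination} are vacuously satisfied when $[n]\setminus\{j\} = \emptyset$), the set $\Delta$ must contain at least two indices, so $A^r$ is a genuine $2 \times |\Delta|$ matrix that cannot be eliminated at any column. Lemma~\ref{lemma:alldm=2} then produces a vector $b^r \in \mathbb{R}^{[2]}$ for which the solution graph $G(I^r)$ of $I^r = (A^r,b^r)$ is not connected.

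Finally, I would invoke the Expansion Lemma (Lemma~\ref{lem:expansion_lemma}), whose hypotheses are now exactly met, to obtain a vector $b \in \mathbb{R}^{[2]}$ such that $G(I)$ with $I = (A,b)$ is not connected, which is the desired conclusion. I do not expect a genuine obstacle here: every step after the initial run of Algorithm~\ref{alg} is a direct application of an earlier result, and the only point demanding any care is the verification that $\Delta \neq \emptyset$, i.e.\ that the greedy elimination order is itself an EO whenever it exhausts all columns.
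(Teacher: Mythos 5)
Your proposal is correct and follows essentially the same route as the paper: define $A^r$ via Algorithm~\ref{alg}, note that by the termination condition it cannot be eliminated at any column, apply Lemma~\ref{lemma:alldm=2} to get a disconnecting $b^r$, and lift via the Expansion Lemma. The paper's proof is just a terser version of yours; your extra checks that $\Delta\neq\emptyset$ (else the insertion order would be an EO) and $|\Delta|\ge 2$ are left implicit there but are sound.
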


\begin{proof}
We define $A^r$ in the same way as in Section~\ref{section:expansion_lemma}. 
Then $A^r$ cannot be eliminated at any column.
By Lemma \ref{lemma:alldm=2}, there exists a 
vector $b^r$ such that the solution graph $G(R(A^r,b^r))$ is not connected.
Lemma \ref{lem:expansion_lemma} completes the proof.
\end{proof}

\subsection{Two columns}
In this subsection, we consider the case where the coefficient matrix of an ILS consists of two columns.
We prove the following lemma.

\begin{lemma}[$\star$] \label{lem:alldn1=2EO}
    Let $A = (a_{ij})$ be a matrix in $\mathbb{R}^{[m] \times [2]}$.
    Suppose that $A$ does not have an EO.
    Then there exists a vector $b \in \mathbb{R}^{[m]}$ such that the solution graph $G(I)$ of the ILS  $I = (A,b)$ is not connected.
\end{lemma}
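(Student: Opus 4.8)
The plan is to mirror the structure of the two-row case. First I would reduce to the situation where $A$ cannot be eliminated at any column. As in Lemma~\ref{lem:alldm=2EO}, define $A^r$ via the algorithm of Section~\ref{section:expansion_lemma}; then $A^r$ cannot be eliminated at any column, and by the Expansion Lemma (Lemma~\ref{lem:expansion_lemma}) it suffices to find a $b^r$ disconnecting $G(A^r,b^r)$. In fact, since $A$ has only two columns and a one-column matrix can always be eliminated (condition~(i) of Definition~\ref{def:elimination} holds vacuously when there is no other column), $A$ has an EO if and only if $A$ can be eliminated at some column; hence under the hypothesis $A^r=A$ and $A$ itself cannot be eliminated at either column.

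Next I would extract the sign structure, the column-analogue of Proposition~\ref{prop:sgna_1_eq_minus_sgna_2}. Because column~$1$ cannot be eliminated, there is a row with $a_{i1}>0,\ a_{i2}\neq0$ and a row with $a_{i1}<0,\ a_{i2}\neq0$; symmetrically for column~$2$. Recording each such row by its sign pattern $(\sgn a_{i1},\sgn a_{i2})\in\{+,-\}^2$ and letting $P_{st}$ indicate whether a row of pattern $(s,t)$ occurs, these four existence statements read $P_{++}\vee P_{+-}$, $P_{-+}\vee P_{--}$, $P_{++}\vee P_{-+}$, and $P_{+-}\vee P_{--}$. The key combinatorial step is to show that they always force a \emph{complementary pair}: two rows of patterns $(+,+)$ and $(-,-)$, or two rows of patterns $(+,-)$ and $(-,+)$. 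This is a short Boolean check: if no complementary pair exists then $\lnot(P_{++}\wedge P_{--})$ and $\lnot(P_{+-}\wedge P_{-+})$; assuming $P_{++}$ forces $P_{--}=0$, then $P_{-+}=1$, then $P_{+-}=0$, contradicting $P_{+-}\vee P_{--}$, and the symmetric case is analogous, while $P_{++}=P_{--}=0$ forces both $P_{+-}=1$ and $P_{-+}=1$, a contradiction.

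With a complementary pair of rows $r,r'$, with normals $n,n'$, I would build $b$ from these two rows only, choosing every other $b_k$ so small that it never binds on $D^{[2]}$. The decisive observation is geometric: for a complementary pair the cone $C=\{v:\ n\cdot v\ge0,\ n'\cdot v\ge0\}$ \emph{excludes all four axis directions} $\pm e_1,\pm e_2$ (where $e_1,e_2$ are the unit vectors). For instance one row of the pair, of pattern $(-,+)$, has $n'\cdot e_1<0$, so $q+e_1$ violates it, and the three remaining directions are blocked the same way using the two rows of the pair; the pattern pair $(+,+)/(-,-)$ behaves identically. Consequently, setting both constraints tight at a chosen lattice point $q$, namely $b_r=n\cdot q$ and $b_{r'}=n'\cdot q$, makes $q$ feasible while every grid-neighbour of $q$ is infeasible, so $q$ is an isolated vertex of $G(I)$. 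Since $C$ opens along a diagonal, it remains only to exhibit one further feasible lattice point $p\neq q$ inside $D^{[2]}$; given such a $p$, the graph $G(I)$ has at least two vertices with $q$ isolated, hence is disconnected.

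The main obstacle is precisely to \emph{guarantee} this second feasible lattice point inside the fixed box $D^{[2]}$, since the coefficients are arbitrary reals: when the two normals are nearly antiparallel, $C$ is a very thin diagonal sliver that, for a fixed $d$, may contain no lattice point near $q$ other than $q$ itself. I expect to resolve this exactly as the two-row proof handles magnitudes: compare $|a_{r1}|,|a_{r2}|,|a_{r'1}|,|a_{r'2}|$ to determine which diagonal direction $C$ opens into and which of the two constraints to make binding, and choose $q$ in the interior of $D^{[2]}$ with room to move one unit along that diagonal (in the exactly-antiparallel subcase one instead pins $x$ onto a single diagonal lattice line, which automatically carries many pairwise non-adjacent feasible points). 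The final verification that the chosen $p$ satisfies both constraints while all neighbours of $q$ fail is then a routine sign computation of the same flavour as Propositions~\ref{prop:pq_in_R(I)_ver2row} and~\ref{prop:a(1-2s)=|a|_2row}.
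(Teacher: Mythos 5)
Your first two steps are sound and match the paper: for a two-column matrix, having an EO is indeed equivalent to being eliminable at some column, so the hypothesis gives that $A$ cannot be eliminated at either column; and your Boolean argument producing a \emph{complementary pair} of rows is a correct (and slightly different) derivation of exactly what the paper establishes as Proposition~\ref{lemma:alldn=2}, namely two rows $i_1,i_2$ with $\sgn(a_{i_1 j})=-\sgn(a_{i_2 j})\neq 0$ for $j=1,2$. The padding of the remaining rows by very negative $b_k$ is also the paper's device.

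The gap is in your endgame. At the point where you hold a complementary pair, the paper simply observes that the $2\times 2$ submatrix $A'$ on those two rows cannot be eliminated at either column (each column of $A'$ has one positive and one negative entry), hence has no EO, and then invokes the already-proven two-row result Lemma~\ref{lem:alldm=2EO} with $n=2$ to get a disconnecting $b'$; no new construction is needed. You instead try to rebuild the disconnected instance directly by making \emph{both} constraints tight at a single lattice point $q$, which isolates $q$ but does not guarantee a second feasible point: when the two normals are nearly or exactly antiparallel (e.g.\ rows $(1,-N)$ and $(-1,N)$ for large $N$), the translated cone $q+C$ meets $D^{[2]}$ only in $q$ itself, so the instance has exactly one feasible solution and is (trivially) connected. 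Your fallback for the antiparallel subcase --- pinning $x$ to a lattice line that ``automatically carries many pairwise non-adjacent feasible points'' --- fails for the same reason: the line $x_1=Nx_2$ meets $\{0,\dots,d\}^2$ only at the origin when $N>d$. The correct construction must make the two constraints tight at \emph{different} points, producing a slab that captures exactly two non-adjacent points such as $(0,0)$ and $(1,1)$; this is precisely the magnitude-ordering argument of Lemma~\ref{lemma:alldm=2}, which you defer rather than carry out. So the one step you leave open is the entire technical content, and the cleanest repair is the paper's: cite Lemma~\ref{lem:alldm=2EO} for $A'$ instead of re-deriving its $2\times 2$ case.
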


\subsection{Three rows}
In this subsection, we consider the case where the coefficient matrix of an ILS consists of three rows.
We prove the following lemma.
At the end of this section, we prove Theorem \ref{thm:main2}.

\begin{lemma}\label{lem:m=3}
    Let $A = (a_{ij})$ be a matrix in $\mathbb{R}^{[3] \times [n]}$.
    Suppose that $A$ cannot be eliminated at any column.
    Then there exists a vector 
    $b \in \mathbb{R}^{[3]}$ such that the solution graph $G(I)$ of 
    the ILS $I = (A,b)$ is not connected.
\end{lemma}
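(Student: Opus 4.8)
The plan is to lean on the two-row result Lemma~\ref{lemma:alldm=2} through two reductions, and to treat the remaining ``cyclic'' matrices by an explicit construction modelled on the proof of that lemma. I first record the only structural fact I need from non-eliminability: every column of $A$ has both a strictly positive and a strictly negative entry, since a column whose entries are all of one sign (in particular an all-zero column) makes condition (i) or (ii) of Definition~\ref{def:elimination} hold vacuously. Two cases then reduce immediately to the two-row setting. If some row of $A$ is identically zero, the other two rows still cannot be eliminated at any column, so Lemma~\ref{lemma:alldm=2} applied to that $2 \times n$ submatrix (setting the third coordinate of $b$ to $0$, which leaves the feasible set unchanged) already produces a disconnected solution graph. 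Similarly, if some pair of rows has entries of opposite nonzero sign in every column, then by the converse of Proposition~\ref{prop:sgna_1_eq_minus_sgna_2} (valid once $n \ge 2$) that pair cannot be eliminated at any column; choosing the third component of $b$ small enough to make its constraint vacuous again reduces the instance to Lemma~\ref{lemma:alldm=2}.

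After these reductions the core case remains: every row is nonzero, but for every pair of rows some column fails to be of opposite nonzero sign. The canonical example is the cyclic sign pattern with columns $(+,-,0)$, $(0,+,-)$, $(-,0,+)$, and here I would imitate Lemma~\ref{lemma:alldm=2} directly: fix an orientation $x \in \{0,1\}^{[n]}$, take the corner base point $x^0_k = d(1-x_k)$, and obtain two probe vectors $p,q$ by decrementing $x^0$ by one in two carefully chosen columns. The vector $b$ is defined by closed-form expressions generalizing those in Lemma~\ref{lemma:alldm=2}, tuned so that $p$ and $q$ are feasible while every vector at Hamming distance $1$ from $q$ violates at least one of the three constraints. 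Feasibility of $p$ is forced by selecting the two perturbed columns to carry the smallest relevant coefficients, exactly as $j^1_1$ and $j^1_2$ are used in the two-row proof, so that the perturbation never pushes a constraint below its threshold. Checking isolation of $q$ then amounts to verifying, for each column and each sign of its entry, that the one admissible move of that coordinate is caught by some row.

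The main obstacle is exactly this last verification, which is substantially larger than in the two-row case: with three rows no single row can guard every coordinate move, so one must decide, column by column, which row is violated when a coordinate is decremented and which is violated when it is incremented, and then confirm that one fixed $b$ realizes all of these simultaneously. The hardest configurations are the cyclic ones above, where the three rows may even be positively dependent (a strictly positive combination of them vanishes); then the feasible region is lower-dimensional, so a second feasible vector $p \neq q$ exists only after $b$ is relaxed away from the exact corner values, which is precisely what the ordering of the perturbed columns achieves. Because I commit to the explicit integer vectors $p$ and $q$ and only let $b$ be real, the whole argument stays valid for irrational $A$, where kernel directions of $A$ need not contain lattice points. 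Finally, to obtain the no-EO version needed for Theorem~\ref{thm:main2} one composes this statement with the Expansion Lemma (Lemma~\ref{lem:expansion_lemma}), exactly as in Lemma~\ref{lem:alldm=2EO}.
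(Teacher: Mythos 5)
Your overall strategy---peel off cases that reduce to the two-row lemma and handle a residual cyclic pattern by an explicit $(b,p,q)$ construction---matches the paper's in spirit, but your reduction criterion is too weak, and as a result the case analysis does not close. The paper reduces to two rows as soon as some pair of rows is of opposite nonzero sign in at least \emph{two} columns (in its notation, $|\Lambda_{i_1,i_2}|\ge 2$ where $\Lambda_{i_1,i_2}=\{j: \sgn(a_{i_1 j})=-\sgn(a_{i_2 j})\neq 0\}$): that already forces the corresponding $2\times n$ submatrix to have no EO, so Lemma~\ref{lem:alldm=2EO} applies after the remaining constraint is made vacuous. You only reduce when a pair is opposite in \emph{every} column. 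Consequently your ``core case'' contains, for example, a $3\times 4$ matrix in which rows $1$ and $2$ are opposite and nonzero in columns $1$ and $2$ but not elsewhere; this is not the cyclic pattern, your sketched construction is not designed for it, and yet it is exactly the situation the stronger reduction disposes of immediately. The paper's Propositions~\ref{prop:row3-1}--\ref{prop:row3-3} show that once all $|\Lambda_{i_1,i_2}|\le 1$ one necessarily has $n=3$ and, up to permutation, $a_{12}=a_{23}=a_{31}=0$; that is what licenses restricting the explicit construction to the cyclic case. Without such an argument your case split is incomplete.

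The second gap is that the explicit construction itself is never carried out. You name it as ``the main obstacle'' and describe what would have to be checked, but the content of the lemma in the cyclic case is precisely the choice of $x$, $b$, $p$, $q$ and the column-by-column verification that $p$ and $q$ are feasible while every Hamming-distance-one neighbour of one of them violates some row. The paper does this concretely: it sets $p_i=d(1-x_i)$ and $q_i=(d-1)(1-x_i)+x_i$, chooses $b_i$ to be the value of row $i$ at $p$ or at $q$ according to whether the middle-magnitude entry of row $i$ lies on the diagonal, and then uses a sign identity (Proposition~\ref{prop:a(1-2s)=|a|_3}) to show that every neighbour of $p$ violates the row indexed by the perturbed coordinate. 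As written, your proposal asserts that suitable closed-form expressions exist and can be ``tuned'' rather than exhibiting them, so the heart of the proof is missing.
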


\begin{proof}
For all integers $i_1,i_2 \in [3]$, we define 
\[
\Lambda_{i_1,i_2} := \{ j \in [n] : \sgn(a_{i_1 j}) = - \sgn(a_{i_2 j}) \neq 0 \}.
\] 

\begin{proposition}[$\star$] \label{prop:row3-1}
    If there exist integers $i_1,i_2 \in [3]$ such that 
    $|\Lambda_{i_1,i_2}| \ge 2$, then for all integers 
    $d \in \mathbb{Z}_{>0}$, there exists a vector 
    $b \in \mathbb{R}^{[3]}$ such that the solution graph $G(I)$ of 
    the ILS $I = (A,b)$ is not connected.
\end{proposition}

\begin{proposition}[$\star$]\label{prop:row3-2}
    If $n \neq 3$, then 
    there exist integers $i_1,i_2 \in [3]$ such that $|\Lambda_{i_1,i_2}| \ge 2$.
\end{proposition}

\begin{proposition}[$\star$]\label{prop:row3-3}
    Suppose that $n=3$ and there are distinct integers $i^{\prime}_1,i^{\prime}_2,i^{\prime}_3 \in [3]$
    such that $\Lambda_{i^{\prime}_1,i^{\prime}_2} \cap \Lambda_{i^{\prime}_2,i^{\prime}_3} \neq 0$.
    Then there exist integers $i_1,i_2 \in [3]$ such that 
    $|\Lambda_{i_1,i_2}| \ge 2$.
\end{proposition}

Proposition \ref{prop:row3-1} implies that 
if there exist integers $i_1,i_2 \in [3]$ such that $|\Lambda_{i_1,i_2}| \ge 2$, then 
the proof is done.
Suppose that, for all integers $i_1, i_2 \in [3]$, $|\Lambda_{i_1,i_2}| = 1$.
By Proposition \ref{prop:row3-2}, we have $n = 3$.
By Proposition \ref{prop:row3-3}, 
$\Lambda_{1,2}$, $\Lambda_{2,3}$, and $\Lambda_{1,3}$ are pairwise disjoint. 
Notice that, for all integers $j \in [3]$, 
there exist integers $i_1, i_2 \in [3]$ such that $j \in \Lambda_{i_1,i_2}$.
Without loss of generality
$\Lambda_{1,2} = \{1\}$,
$\Lambda_{2,3} = \{2\}$, and
$\Lambda_{1,3} = \{3\}$.
We have $a_{1 2} = a_{2 3} = a_{3 1} = 0$.
For example, if $a_{12} \neq 0$, then $2 \in \Lambda_{1,2}$ or 
$2 \in \Lambda_{1,3}$.

We define the vector $x \in \{0,1\}^{[3]}$ by 
\[
x_k :=
\left\{
\begin{aligned}
    & 0 && \mbox{if $a_{k k} > 0$}\\ 
    & 1 && \mbox{if $a_{k k} < 0$}
\end{aligned}
\right.
\qquad (k \in [3]) .
\]

For all integers $i \in [3]$, 
we assume that 
$\{ j^i_1 , j^i_2 , j^i_3\} = [3]$ and 
$|a_{i j^i_1}| \le |a_{i j^i_2}| \le |a_{i j^i_3}|$.
By the definition, for all integers $i \in [3]$, we have $a_{i j^i_1} =0$.
We define the vector $b \in \mathbb{R}^{[3]}$ by
\[
b_i := 
\left\{
\begin{aligned}
    & \sum_{k \in [n]} a_{i k} d (1 - x_k)
    &&\mbox{if $j^i_2 = i$} \\
    & \sum_{k \in [n]} a_{i k} ((d - 1) (1 - x_k) + x_k)
    &&\mbox {if $j^i_2 \neq i$}
\end{aligned}
\right.
\qquad ( i \in [3] ) .
\]
Then we consider the ILS $I = (A,b)$.

We define the vectors $p,q \in D^{[3]}$ as follows.
\begin{align*}
     p_i &:= d (1 - x_i) & (i \in [3]), \\
     q_i &:= (d - 1) (1 - x_i) + x_i &  (i \in [3]).
\end{align*}
We prove that the vectors $p,q$ belong to $R(I)$ and
they are not connected
on the solution graph $G(I)$.

\begin{proposition}[$\star$]\label{prop:pq_in _R(I)_Ver_row3}
    The vectors $p,q$ belong to $R(I)$.
\end{proposition}

\begin{proposition}[$\star$]\label{prop:a(1-2s)=|a|_3}
    For all integers $j \in [3]$, we have $a_{j j}(1 - 2 x_{j}) = |a_{j j}|$.
    For all integers $i \in [3]$ and all integers $s \in \{ 2 , 3 \}$,
    if $j^i_s \neq i$, then we have $a_{i j^i_s}(1 - 2 x_{j^i_s}) = - |a_{i j^i_s}|$.
\end{proposition}

We define $Y$ as the set of vectors $y \in D^{[n]}$ 
such that $\dist(p,y) = 1$.
In other words, the subset $Y$ is the set of 
neighborhood vertices of $p$ on $G(D^{[n]})$.
Then we prove that $y \notin R(I)$ for all vectors $y \in Y$.

We arbitrarily take a vector $y \in Y$.
From the definition, the following equation is obtained for the vector $y$.
\[
    y_k = 
    \left\{
    \begin{aligned}
    & p_k && (k \neq \ell) \\
    & \xi && (k = \ell) .
    \end{aligned}
    \right.
\]
where $\ell$ is an integer in $[3]$ and $\xi$ is an integer in $D$ such that $\xi \neq p_{\ell}$.

Suppose that $j^{\ell}_2 = \ell$.
For all vectors $y \in Y$, we have 
\begin{align*}
    \sum_{k \in [3]} a_{\ell k} y_k - b_{\ell} 
    = \sum_{k \in [3]} a_{\ell k} y_k -  \sum_{k \in [3]} a_{\ell k} d (1 - x_{\ell}) 
    = a_{\ell \ell}(\xi - d (1 - x_{\ell})).
\end{align*}

If $a_{\ell \ell} > 0$, then $x_{\ell} = 0$.
Since $p_{\ell} = d (1 - x_{\ell}) = d$, the inequality 
$0 \le \xi \le d-1 $ is obtained. 
We have
\begin{align*}
    \sum_{k \in [3]} a_{\ell k} y_k - b_{\ell} 
    = a_{\ell \ell}(\xi - d (1 - x_{\ell}))
    \le a_{\ell \ell} (d - 1 - d) 
    = - a_{\ell \ell} < 0.
\end{align*}

If $a_{\ell \ell} < 0$, then $x_{\ell} = 1$.
Since $p_{\ell} = d (1 - x_{\ell}) = 0$, the inequality 
$1 \le \xi \le d$ is obtained. 
We have
\begin{align*}
    \sum_{k \in [3]} a_{\ell k} y_k - b_{\ell} 
    = a_{\ell \ell}(\xi - d (1 - x_{\ell}))
    \le a_{\ell \ell} < 0.
\end{align*}

Suppose that $j^{\ell}_2 \neq \ell$.
By Proposition \ref{prop:a(1-2s)=|a|_3}, for all vectors $y \in Y$, we have 
\begin{align*}
    \sum_{k \in [3]} a_{\ell k} y_k - b_{\ell}
    &= a_{\ell j^{\ell}_2}d(1 -x_{j^{\ell}_2}) + a_{\ell \ell} \xi -b_\ell \\
    &= a_{\ell j^{\ell}_2}(d ( 1 - x_{j^{\ell}_2}) 
    - ((d-1)(1-x_{j^{\ell}_2}) + x_{j^{\ell}_2}) \\
    &\qquad + a_{\ell \ell}(\xi - ((d-1)(1-x_{\ell}) + x_{\ell})) \\
    &= a_{\ell j^{\ell}_2}(1 - 2 x_{j^{\ell}_2}) 
    + a_{\ell \ell}(\xi - ((d-1)(1-x_{\ell}) + x_{\ell})) \\
    &= -|a_{\ell j^{\ell}_2}| + 
    a_{\ell \ell}(\xi - ((d-1)(1-x_{\ell}) + x_{\ell})) .
\end{align*}

If $a_{\ell \ell} > 0$, then $x_{\ell} = 0$.
Since $p_{\ell} = d (1 - x_{\ell}) = d$, the inequality 
$0 \le \xi \le d-1$ is obtained. 
\begin{align*}
    \sum_{k \in [3]} a_{\ell k} y_k - b_{\ell}  
    &= -|a_{\ell j^{\ell}_2}| + 
    a_{\ell \ell}(\xi - ((d-1)(1-x_{\ell}) + x_{\ell})) \\
    &\le -|a_{\ell j^{\ell}_2}| + a_{\ell \ell}(d -1 - (d - 1))
    = -|a_{\ell j^{\ell}_2}| < 0 .
\end{align*}

If $a_{\ell \ell} < 0$, then $x_{\ell} = 1$.
Since $p_{\ell} = d (1 - x_{\ell}) = 0$, the inequality 
$1 \le \xi \le d$ is obtained. 
We have
\begin{align*}
    \sum_{k \in [3]} a_{\ell k} y_k - b_{\ell} 
    &= -|a_{\ell j^{\ell}_2}| + 
    a_{\ell \ell}(\xi - ((d-1)(1-x_{\ell}) + x_{\ell})) \\
    &\le -|a_{\ell j^{\ell}_2}| + a_{\ell \ell}(1 - 1)
    = -|a_{\ell j^{\ell}_2}| < 0 .
\end{align*}

For all vectors $y \in Y$, we obtain $y \in R(I)$.
Therefore, any neighborhood vertex of $p$ on $G(D^{[n]})$ is not a feasible solution.
This completes the proof.
\end{proof}

\begin{lemma}\label{lem:m=3EO}
    Let $A = (a_{ij})$ be a matrix in $\mathbb{R}^{[3] \times [n]}$.
    Suppose that $A$ does not have an EO.
    Then there exists a vector 
    $b \in \mathbb{R}^{[3]}$ such that the solution graph $G(I)$ of 
    the ILS $I = (A,b)$ is not connected.
\end{lemma}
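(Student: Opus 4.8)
Lemma \ref{lem:m=3EO} is the three-row analogue of Lemma \ref{lem:alldm=2EO}, and the natural plan is to mirror that argument exactly, using the Expansion Lemma to reduce the "no EO" case to the "cannot be eliminated at any column" case.

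The plan is to follow verbatim the reduction already used for the two-row case in Lemma \ref{lem:alldm=2EO}, simply substituting Lemma \ref{lem:m=3} for Lemma \ref{lemma:alldm=2}. Since $A$ has no EO by hypothesis, I would first run Algorithm \ref{alg} on $A$ to obtain the column sets $\Delta$ and $E$, and define $A^r$ to be the submatrix of $A$ indexed by $\Delta$, exactly as in Section \ref{section:expansion_lemma}.

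The first thing to check is that $A^r$ still has exactly three rows; this is immediate because Algorithm \ref{alg} only ever deletes columns, never rows. The second thing to check is that $A^r = \elm(A, E)$ cannot be eliminated at any column — but this is precisely the termination condition of the \textbf{while} loop in Algorithm \ref{alg}. I would also record that $\Delta \neq \emptyset$: if instead $\Delta$ were empty then $E = [n]$, and the order in which columns entered $E$ would be an EO of $A$, contradicting the hypothesis. Hence $A^r$ is a genuine $3 \times |\Delta|$ matrix with $|\Delta| \ge 1$ that cannot be eliminated at any column.

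With these two facts in hand, Lemma \ref{lem:m=3} applies directly to $A^r$ and yields a vector $b^r \in \mathbb{R}^{[3]}$ such that the solution graph $G(R(A^r, b^r))$ is disconnected. Finally, the Expansion Lemma (Lemma \ref{lem:expansion_lemma}), whose standing hypothesis that $A$ has no EO is exactly our assumption, upgrades this disconnection from the reduced matrix to the full one: it produces a vector $b \in \mathbb{R}^{[3]}$ for which $G(I) = G(R(A, b))$ is disconnected, which is the desired conclusion.

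I expect essentially no obstacle here, since all the substantive content has already been established elsewhere: the explicit combinatorial construction of a disconnecting right-hand side lives in Lemma \ref{lem:m=3}, and the transfer from $A^r$ back to $A$ lives in the Expansion Lemma. The only points that warrant even a moment's care are the two observations above — that $A^r$ retains three rows and that it cannot be eliminated at any column — both of which follow instantly from the definition of Algorithm \ref{alg}.
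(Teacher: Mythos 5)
Your proposal is correct and follows essentially the same route as the paper: run Algorithm \ref{alg} to obtain $A^r$, observe that it cannot be eliminated at any column, apply Lemma \ref{lem:m=3} to get a disconnecting $b^r$, and transfer back via the Expansion Lemma. Your extra observation that $\Delta \neq \emptyset$ (since otherwise the insertion order of $E$ would be an EO of $A$) is a small point the paper leaves implicit, but it is the same argument.
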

\begin{proof}
We define $A^r$ in the same way as in Section~\ref{section:expansion_lemma}. 
Then $A^r$ cannot be eliminated at any column.
By Lemma \ref{lem:m=3}, we have the vector $b^r$ such that the solution graph $G(R(A^r,b^r))$ is not connected.
Lemma \ref{lem:expansion_lemma} completes the proof.
\end{proof}

\begin{proof}[Proof of Theorem \ref{thm:main2}]
We consider the contraposition of the statement in Theorem \ref{thm:main2}.
If $m=2$ (resp. $n=2$, $m=3$), Lemma \ref{lem:alldm=2EO} (resp. Lemma \ref{lem:alldn1=2EO}, Lemma \ref{lem:m=3EO}) completes this proof.
\end{proof}

\bibliographystyle{plain}
\bibliography{ILS_connectivity}

\clearpage 

\appendix
\section{Omitted Figures}
\subsection{Branch diagram for the proof of Lemma~\ref{lemma:alldm=2}}
\label{appendix:diagram}
\begin{figure}[h]
\centering
\includegraphics[width=14cm]{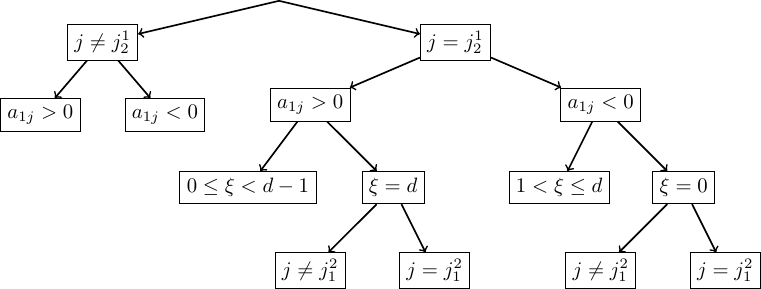}
\caption{The branch diagram for the proof of Lemma \ref{lemma:alldm=2}.}
\label{fig:lemma:alldm=2}
\end{figure}

\section{Omitted Proofs}

\subsection{Proof of Proposition~\ref{prop:sgna_1_eq_minus_sgna_2}}
\begin{proof}
    Since $A$ cannot be eliminated at any column, $A$ has at least one positive element and at least one negative element in each column.
    Since $A$ has two rows, any column of $A$ does not have $0$ as its element.
    This completes the proof.
\end{proof}

\subsection{Proof of Proposition~\ref{prop:pq_in_R(I)_ver2row}}
\begin{proof}
We use Proposition \ref{prop:a(1-2s)=|a|_2row} in this proof.
Notice that Proposition~\ref{prop:pq_in_R(I)_ver2row} is not used
in the proof of Proposition \ref{prop:a(1-2s)=|a|_2row},

First, we consider $p$.
By Proposition \ref{prop:a(1-2s)=|a|_2row},
we have
\begin{align*}
    &\sum_{k\in [n]} a_{1 k} p_k - b_1 \\ 
    &= \sum_{k \in [n]} a_{1k} p_k
    - \left( \sum_{k \in [n] \setminus \{ j^1_2\} } a_{1k} d ( 1 - x_k)
    + a_{1 j^1_2} ((d - 1) (1 - x_{j^1_2}) + x_{j^1_2}) \right) \\
    &= a_{1 j^1_1}(p_{j^1_1} - d(1 - x_{j^1_1}) ) 
    + a_{1 j^1_2}(p_{j^1_2} - ((d - 1) (1 - x_{j^1_2}) + x_{j^1_2} ))  \\
    &= a_{1 j^1_1}(((d - 1) (1 - x_{j^1_1}) + x_{j^1_1})- d(1-x_{j^1_1}) ) \\
    & \qquad + a_{1 j^1_2}(d (1 - x_{j^1_2}) - ((d - 1) (1 - x_{j^1_2}) + x_{j^1_2} ))  \\
    &= - a_{1 j^1_1}(1 - 2 x_{j^1_1}) 
    + a_{1 j^1_2}(1 - 2 x_{j^1_2}) = - |a_{1 j^1_1}| + |a_{1 j^1_2}| \ge 0 .
\end{align*}

Next, we consider the element $b_2$. If $j^1_1 = j^2_1$, then $\sum_{k\in [n]}a_{2 k}p_k=b_2$.
By Proposition \ref{prop:a(1-2s)=|a|_2row},
if $j^1_1 \neq j^2_1$, then we have
\begin{align*}
    &\sum_{k\in [n]} a_{2 k} p_k - b_2 \\
    &= \sum_{k \in [n]} a_{2k} p_k
    - \left( \sum_{k \in [n] \setminus \{ j^2_1\} } a_{2k} d (1 - x_k)
    + a_{2 j^2_1} ((d - 1) (1 - x_{j^2_1}) + x_{j^2_1}) \right) \\
    &= a_{2 j^1_1}(p_{j^1_1} - d(1 - x_{j^1_1}) ) 
    + a_{2 j^2_1}(p_{j^2_1} - ((d - 1) (1 - x_{j^2_1}) + x_{j^2_1} )  )\\
    &= a_{2 j^1_1}((d - 1) (1 - x_{j^1_1}) + x_{j^1_1} - d(1-x_{j^1_1}) ) \\
    & \qquad + a_{2 j^2_1}(d (1 - x_{j^2_1}) -( (d - 1) (1 - x_{j^2_1}) + x_{j^2_1} )  )\\
    &= - a_{2 j^1_1}(1 - 2 x_{j^1_1}) + a_{2 j^2_1}(1 - 2 x_{j^2_1}) 
    = |a_{2 j^1_1}| - |a_{2 j^2_1}| \ge 0 .
\end{align*}

Similarly, for $q$, we have
\begin{align*}
    &\sum_{k\in [n]} a_{1 k} q_k - b_1 \\
    &= \sum_{k \in [n]} a_{1k} q_k
    - \left( \sum_{k \in [n] \setminus \{ j^1_2\} } a_{1k} d (1 - x_k)
    + a_{1 j^1_2} ((d - 1) (1 - x_{j^1_2}) + x_{j^1_2}) \right) = 0.
\end{align*}
We consider the element $b_2$. If $j^1_2 = j^2_1$, then $\sum_{k\in [n]}a_{2 k}q_k=b_2$.
By Proposition \ref{prop:a(1-2s)=|a|_2row},
if $j^1_2 \neq j^2_1$, then we have
\begin{align*}
    &\sum_{k\in [n]} a_{2 k} q_k - b_2 \\
    &= \sum_{k \in [n]} a_{2k} q_k
    - \left( \sum_{k \in [n] \setminus \{ j^2_1\} } a_{2k} d (1 - x_k)
    + a_{2 j^2_1} ((d - 1) (1 - x_{j^2_1}) + x_{j^2_1}) \right) \\
    &= a_{2 j^1_2}(q_{j^1_2} - d(1 - x_{j^1_2}) ) 
    + a_{2 j^2_1}(q_{j^2_1} - ((d - 1) (1 - x_{j^2_1}) + x_{j^2_1} ) ) \\
    &= a_{2 j^1_2}(((d - 1) (1 - x_{j^1_2}) + x_{j^1_2}) - d(1-x_{j^1_2}) ) \\
    & \qquad + a_{2 j^2_1}(d (1 - x_{j^2_1}) - ( (d - 1) (1 - x_{j^2_1}) + x_{j^2_1} ) ) \\
    &= - a_{2 j^1_2}(1 - 2 x_{j^1_2}) + a_{2 j^2_1}(1 - 2 x_{j^2_1}) 
    = |a_{2 j^1_2}| - |a_{2 j^2_1}| \ge 0 .
\end{align*}
This completes the proof.
\end{proof}

\subsection{Proof of Proposition~\ref{prop:a(1-2s)=|a|_2row}}
\begin{proposition}\label{prop:1-2x=sgna1j}
    For all integers $j \in [n]$, $1 - 2 x_{j} = \sgn(a_{1 j}) = -\sgn(a_{2 j})$.
\end{proposition}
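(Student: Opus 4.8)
The plan is to prove the two equalities separately by a direct case analysis on the sign of $a_{1j}$, using Proposition~\ref{prop:sgna_1_eq_minus_sgna_2} as the key input. The second equality $\sgn(a_{1j}) = -\sgn(a_{2j})$ is immediate: since $A$ cannot be eliminated at any column, Proposition~\ref{prop:sgna_1_eq_minus_sgna_2} asserts exactly this for every $j \in [n]$. That same proposition also guarantees $\sgn(a_{1j}) \neq 0$, i.e.\ $a_{1j} \neq 0$, so the definition of $x_j$ covers every index $j$ and there is no undefined middle case to worry about.

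For the first equality $1 - 2x_j = \sgn(a_{1j})$, I would split into the two cases of the definition of $x_j$. If $a_{1j} > 0$, then $x_j = 0$ by definition, so $1 - 2x_j = 1 = \sgn(a_{1j})$. If $a_{1j} < 0$, then $x_j = 1$, so $1 - 2x_j = -1 = \sgn(a_{1j})$. Since the case $a_{1j} = 0$ is excluded by Proposition~\ref{prop:sgna_1_eq_minus_sgna_2}, these two alternatives are exhaustive, and the claimed identity holds for all $j \in [n]$.

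There is essentially no obstacle here: the statement is a bookkeeping identity relating the binary indicator $x_j$ to the sign pattern of the matrix, and the only subtlety is ensuring the degenerate case $a_{1j} = 0$ never arises, which is precisely what the non-eliminability hypothesis rules out through Proposition~\ref{prop:sgna_1_eq_minus_sgna_2}. This proposition then feeds directly into the proof of Proposition~\ref{prop:a(1-2s)=|a|_2row}: from $1 - 2x_j = \sgn(a_{1j})$ one obtains $a_{1j}(1 - 2x_j) = a_{1j}\sgn(a_{1j}) = |a_{1j}|$, and combining with $\sgn(a_{1j}) = -\sgn(a_{2j})$ gives $a_{2j}(1 - 2x_j) = a_{2j}\sgn(a_{1j}) = -a_{2j}\sgn(a_{2j}) = -|a_{2j}|$, which is exactly the assertion of Proposition~\ref{prop:a(1-2s)=|a|_2row}.
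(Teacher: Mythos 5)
Your proof is correct and follows essentially the same route as the paper: a two-case analysis on the sign of $a_{1j}$ (with the zero case excluded by Proposition~\ref{prop:sgna_1_eq_minus_sgna_2}) for the first equality, and a direct appeal to that same proposition for the second. In fact your version is slightly cleaner, since the paper's own proof contains sign typos (it writes $\sgn(a_{1j})=0$ and $1-2x_j=0$ in the case $a_{1j}<0$, where $-1$ is clearly intended).
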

\begin{proof}
If $a_{1 j} > 0$, then $x_{j} = 0$ and $\sgn(a_{1 j})=1$.
If $a_{1 j} < 0$, then $x_{j} = 1$ and $\sgn(a_{1 j})=0$.
Since $x_{j} = 0$ (resp.\ $x_{j} = 1$) 
implies $1 - 2 x_{j}=1$ (resp.\ $1 - 2 x_{j}=0$), 
we have $1 - 2 x_{j} = \sgn(a_{1 j})$.
The second equation follows from Proposition \ref{prop:sgna_1_eq_minus_sgna_2}.
\end{proof}

\begin{proof}[Proof of Proposition~\ref{prop:a(1-2s)=|a|_2row}]
By Proposition \ref{prop:1-2x=sgna1j}, we have 
\[
a_{1 j}(1 - 2 x_{j}) 
= |a_{1 j}|\sgn(a_{1 j})(1 - 2 x_{j}) 
= |a_{1 j}| \sgn(a_{1 j})^2 = |a_{1 j}|.
\]
Similarly, since $1 - 2 x_{j} = - \sgn(a_{2 j})$, we have 
\[
a_{2 j}(1 - 2 x_{j}) 
= |a_{2 j}|\sgn(a_{2 j})(1 - 2 x_{j}) 
= - |a_{2 j}| \sgn(a_{2 j})^2 = - |a_{2 j}|.
\]
This completes proof.
\end{proof}

\subsection{The remaining part of the proof of Lemma~\ref{lemma:alldm=2}}
\label{appendix:remaining}

\noindent \textbf{Case 2.2 ($a_{1 j}<0$).}
If $a_{1 j}<0$, then $x_{j} = 1$.
Thus, $q_j=(d - 1) (1 - x_{j}) + x_{j} = 1$.
Therefore, either $2 \le \xi \le d$ or $\xi=0$ is satisfied.
We have
\begin{align*}
\sum_{k\in [n]} a_{1 k} y_k - b_1 
= a_{1 j}(\xi - ((d - 1) (1 - x_{j}) + x_{j} )  )
= a_{1 j}(\xi - 1) .
\end{align*}

\noindent \textbf{Case 2.2.1 ($2 \le \xi \le d$).}
If $2 \le \xi \le d$, then we have
\begin{align*}
\sum_{k\in [n]} a_{1 k} y_k - b_1 
= a_{1 j}(\xi - 1) 
\le a_{1 j}(2 - 1) = a_{1 j} < 0 . 
\end{align*}

\noindent \textbf{Case 2.2.2 ($\xi=0$).}
In this case, we consider $b_2$.

\noindent \textbf{Case 2.2.2.1 ($j \neq j^2_1$).}
By Proposition \ref{prop:a(1-2s)=|a|_2row},
if $j \neq j^2_1$, then we have
\begin{align*}
&\sum_{k\in [n]} a_{2 k} y_k - b_2 \\
&= \sum_{k \in [n]} a_{2k} y_k
- \left( \sum_{k \in [n] \setminus \{ j^2_1\} } a_{2k} d (1 - x_k)
+ a_{2 j^2_1} ((d - 1) (1 - x_{j^2_1}) + x_{j^2_1}) \right) \\
&= a_{2 j}(\xi - d(1 - x_j) ) 
+ a_{2 j^2_1}(y_{j^2_1} - ((d - 1) (1 - x_{j^2_1}) + x_{j^2_1} ))  \\
&= a_{2 j}(0 - d(1 - 1) ) 
+ a_{2 j^2_1}(q_{j^2_1} - ((d - 1) (1 - x_{j^2_1}) + x_{j^2_1} ) ) \\
&= a_{2 j^2_1}(d(1 - x_{j^2_1}) - ((d - 1) (1 - x_{j^2_1}) + x_{j^2_1} ) ) 
= a_{2 j^2_1}(1-2x_{j^2_1}) 
= -|a_{2 j^2_1}| < 0.
\end{align*}

\noindent \textbf{Case 2.2.2.2 ($j = j^2_1$).}
If $j = j^2_1$, then $a_{2 j} > 0$ follows from $a_{1 j} < 0$ 
and Proposition \ref{prop:sgna_1_eq_minus_sgna_2}.
Thus, 
\begin{align*}
&\sum_{k\in [n]} a_{2 k} y_k - b_2 \\
&= \sum_{k \in [n]} a_{2k} y_k
- \left( \sum_{k \in [n] \setminus \{ j\} } a_{2k} d (1 - x_k)
+ a_{2 j} ((d - 1) (1 - x_{j}) + x_{j}) \right) \\
&= a_{2 j}(\xi - ((d - 1) (1 - x_{j}) + x_{j} ))  
= a_{2 j}(0 - ((d - 1) (1 - 1) + 1 ))  
= - a_{2 j} < 0 .
\end{align*}

\subsection{Proof of Lemma~\ref{lem:alldn1=2EO}}

\begin{proof}
Since $A$ dose not have an EO, it cannot be eliminated at any column.

Here we prove the following proposition.
\begin{proposition}\label{lemma:alldn=2}
    There exist integers $i_1,i_2 \in [m]$ such that, 
    for any integer $j \in [2]$, 
    $\sgn(a_{i_1 j})=-\sgn(a_{i_2 j})$.
\end{proposition}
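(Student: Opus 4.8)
The plan is to convert the non-elimination hypothesis at each of the two columns into explicit sign statements about the rows, and then reduce the claim to a tiny Boolean tautology about which sign patterns are realized among the rows.

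First I would unpack Definition~\ref{def:elimination} in the special case $n=2$, where $[2]\setminus\{1\}=\{2\}$ and $[2]\setminus\{2\}=\{1\}$. Saying that $A$ cannot be eliminated at column $1$ means that neither condition (i) nor (ii) holds there; that is, there is a row $i$ with $a_{i1}>0$ and $a_{i2}\neq 0$, and there is a (possibly different) row with $a_{i'1}<0$ and $a_{i'2}\neq 0$. Symmetrically, non-elimination at column $2$ produces a row with $a_{i2}>0,\ a_{i1}\neq 0$ and a row with $a_{i2}<0,\ a_{i1}\neq 0$. The key observation is that each of these four witness rows has \emph{both} entries nonzero, so each is described by a sign pattern $(\sgn a_{i1},\sgn a_{i2})\in\{+1,-1\}^{2}$, and for such rows "$\sgn(a_{i_1 j})=-\sgn(a_{i_2 j})$ for $j\in[2]$" is exactly the statement that their patterns are antipodal.

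Next I would encode the four existence statements as Boolean variables. Let $\alpha,\beta,\gamma,\delta$ stand for "some row realizes pattern $(+,+),(+,-),(-,+),(-,-)$" respectively. The four conditions above read $\alpha\lor\beta$, $\gamma\lor\delta$, $\alpha\lor\gamma$, and $\beta\lor\delta$. The proposition asks precisely that two antipodal patterns occur, i.e. that $(\alpha\land\delta)\lor(\beta\land\gamma)$ holds; taking $i_1,i_2$ to be rows realizing such a pair then yields the claim. I would finish by deriving this from the four clauses by contradiction: assume $(\lnot\alpha\lor\lnot\delta)\land(\lnot\beta\lor\lnot\gamma)$ and split on $\alpha$. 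If $\alpha$ holds, then $\lnot\delta$, so $\gamma\lor\delta$ forces $\gamma$, whence $\lnot\beta$, contradicting $\beta\lor\delta$; if $\alpha$ fails, then $\alpha\lor\beta$ and $\alpha\lor\gamma$ force $\beta$ and $\gamma$, giving $\beta\land\gamma$, again a contradiction.

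This is a finite case check rather than an analytic argument, so I do not expect a genuine obstacle. The only point requiring care is the step that all four witnesses have both coordinates nonzero, so that their sign patterns live in $\{+1,-1\}^{2}$ and "opposite sign in both columns" coincides with "negated pattern"; this is guaranteed exactly by the "$\neq 0$" clauses built into the two conditions of Definition~\ref{def:elimination}.
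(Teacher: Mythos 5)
Your proof is correct and follows essentially the same route as the paper's: both unpack non-eliminability at each of the two columns into witness rows whose entries are all nonzero, and then run a short finite case analysis on signs to extract an antipodal pair. Your reformulation via the four realizable sign patterns and the clauses $\alpha\lor\beta$, $\gamma\lor\delta$, $\alpha\lor\gamma$, $\beta\lor\delta$ is just a more symmetric packaging of the paper's three-row ($p,q,r$) case split.
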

\begin{proof}
Since $A$ cannot be eliminated at any column, there exist integers $p,q,r \in [m]$ such that the following conditions.
\begin{enumerate}
    \item $a_{p1} \neq 0$ and $a_{p2} \neq 0$.
    \item $\sgn(a_{q1}) = -\sgn(a_{p1})$ and $a_{q2} \neq 0$.
    \item $\sgn(a_{r2}) = -\sgn(a_{p2})$ and $a_{r1} \neq 0$.
\end{enumerate}

First, we suppose that $\sgn(a_{p2}) = \sgn(a_{q2})$ and $\sgn(a_{p1}) = \sgn(a_{r1})$.
Then the above conditions 
imply that 
\[
\sgn(a_{q1}) = -\sgn(a_{p1}) = -\sgn(a_{r1}), \quad
\sgn(a_{q2}) = \sgn(a_{p2}) = -\sgn(a_{r2}).
\]
Thus, if we define $i_1 := q$ and $i_2 := r$, then we obtain $\sgn(a_{i_1 j})=-\sgn(a_{i_2 j})$
for any integer $j \in [2]$.

Next, we suppose that $\sgn(a_{q2}) = -\sgn(a_{p2})$. If we define $i_1 := p$ and $i_2 := q$, then, for any $j \in [2]$, we obtain $\sgn(a_{i_1 j})=-\sgn(a_{i_2 j})$.

Finally, we suppose that $\sgn(a_{r1}) = -\sgn(a_{p1})$.
If we define $i_1 := p$ and $i_2 := r$, then, for any $j \in [2]$, we obtain $\sgn(a_{i_1 j})=-\sgn(a_{i_2 j})$.
\end{proof}

We take integers $i_1,i_2 \in [m]$ satisfying the condition in
Proposition \ref{lemma:alldn=2}.
Using this, we define the submatrix $A^{\prime}$ of $A$ by
\[
A^{\prime} := 
\begin{pmatrix}
a_{i_1 1} & a_{i_1 2} \\
a_{i_2 1} & a_{i_2 2}
\end{pmatrix}.
\]

From Lemma \ref{lem:alldm=2EO}, there exists a vector 
$b^{\prime} \in \mathbb{R}^{[2]}$ such that the solution graph $G(I^{\prime})$ of the ILS $I^{\prime} = (A^{\prime}, b^{\prime})$ is not connected.
Using $b^{\prime}$, we define $b$ by
\[
b_k := \left\{
\begin{aligned}
&b'_1 && (k=i_1) \\
&b'_2 &&(k=i_2) \\
&- d \cdot \textstyle{\sum^{n}_{\ell=1} |a^r_{k \ell}|} && (k\neq i_1 , i_2)
.\end{aligned}
\right.
\]
Then it is not difficult to see that the solution graph $G(I)$ of the ILS $I=(A,b)$
is not connected.
This completes the proof.
\end{proof}

\subsection{Proof of Proposition~\ref{prop:row3-1}}
\begin{proof}
Suppose that 
there exist integers $i_1,i_2 \in [3]$ such that 
$|\Lambda_{i_1,i_2}| \ge 2$. 
Without loss of generality, we suppose that 
$i_1 = 1$ and $i_2 = 2$. 
We define the submatrix $A^{\prime}$ of $A$ by 
\[
A^{\prime} := 
\begin{pmatrix}
a_{1 1} & a_{1 2} & \cdots & a_{1 n}\\
a_{2 1} & a_{2 2} & \cdots & a_{2 n}
\end{pmatrix}.
\]

Since $|\Lambda_{1,2}| \ge 2$, $A^{\prime}$ does not have an EO.
Thus, we can apply Lemma \ref{lem:alldm=2EO} to $A^{\prime}$.
That is there exists a vector $b^{\prime} \in \mathbb{R}^{[2]}$ such that
the solution graph $G(I^{\prime})$ 
of the ILS $I^{\prime} = (A^{\prime},b^{\prime})$ is not connected.
Using $b^{\prime}$, we define the vector $b \in \mathbb{Z}^{[3]}$ by
\[
b :=
\begin{pmatrix}
    b^{\prime}_1 \\
    b^{\prime}_2 \\
    - d \cdot \sum^{n}_{k=1} |a^r_{3 k}|
\end{pmatrix} .
\]
Then the solution graph $G(I)$ of the ILS $I=(A,b)$
is not connected.
This completes the proof.
\end{proof}

\subsection{Proof of Proposition~\ref{prop:row3-2}}
\begin{proof}
Suppose that $n = 2$.
Proposition \ref{lemma:alldn=2} completes the proof.

Suppose that $n \ge 4$.
By the definition of elimination, each column vector of $A$ has at least one positive element and at least one negative element.
Each column vector of $A$ belongs to at least one of $\Lambda_{1,2}$, $\Lambda_{2,3}$, or $\Lambda_{1,3}$.
Since $n \ge 4$, at least one of $\Lambda_{1,2}$, $\Lambda_{2,3}$, and $\Lambda_{1,3}$ has at least two elements.
This completes the proof.
\end{proof}

\subsection{Proof of Proposition~\ref{prop:row3-3}}
\begin{proof}
Let $\ell_1,\ell_2$, and $\ell_3$ denote the elements of $[n]$.
Suppose that $\ell_1 \in \Lambda_{i^{\prime}_1,i^{\prime}_2} 
\cap \Lambda_{i^{\prime}_2,i^{\prime}_3}$.

By the definition of elimination, each column vector of $A$ has at least one positive element and at least one negative element.
Each column vector of $A$ belongs to at least one of 
$\Lambda_{i^{\prime}_1,i^{\prime}_2}$, 
$\Lambda_{i^{\prime}_2,i^{\prime}_3}$, and
$\Lambda_{i^{\prime}_1,i^{\prime}_3}$.

Suppose that at least one of $\ell_2, \ell_3$ belongs to $\Lambda_{i^{\prime}_1,i^{\prime}_2}$ or 
$\Lambda_{i^{\prime}_2,i^{\prime}_3}$.
At least one of $\Lambda_{i^{\prime}_1,i^{\prime}_2}$ and $\Lambda_{i^{\prime}_2,i^{\prime}_3}$
has two elements.
If both $\ell_2$ and $\ell_3$ do not belong to $\Lambda_{i^{\prime}_1,i^{\prime}_2}$ or 
$\Lambda_{i^{\prime}_2,i^{\prime}_3}$, 
then $\ell_2, \ell_3 \in \Lambda_{i^{\prime}_1,i^{\prime}_3}$.
This completes the proof.
\end{proof}

\subsection{Proof of Proposition~\ref{prop:pq_in _R(I)_Ver_row3}}
\begin{proof}
We use Proposition \ref{prop:a(1-2s)=|a|_3} in this proof.
Notice that Proposition~\ref{prop:pq_in _R(I)_Ver_row3} is not used
in the proof of Proposition \ref{prop:a(1-2s)=|a|_3}.

We consider $p$. 
We take arbitrary integer $i \in [3]$.
If $j^i_2 = i$, then $\sum_{k \in [3]} a_{i k} p_k = b_i$.
Suppose that $j^i_2 \neq i$.
In this case, $i = j^i_3$.
By Proposition \ref{prop:a(1-2s)=|a|_3},
we have 
\begin{align*}
    \sum_{k \in [3]} a_{i k} p_k - b_i
    &= \sum_{k \in [3]} a_{i k} d (1 - x_k) - \sum_{k \in [n]} a_{i k} ((d - 1) (1 - x_k) + x_k) \\
    &= \sum_{k \in [3]} a_{i k} (1 - 2 x_k) 
    = a_{i i}(1 - 2 x_i) + a_{i j^i_2} (1 - 2 x_{j^i_2}) \\
    &= |a_{i i}| - |a_{i j^i_2}| \ge 0 .
\end{align*}
This completes the proof that $p$ belongs to $R(I)$.

We consider $q$. 
We take arbitrary integer $i \in [3]$.
If $j^i_2 \neq i$, then $\sum_{k \in [3]} a_{i k} q_k = b_i$.
Suppose that $j^i_2 = i$.
In this case, $j^i_3 \neq i$.
By Proposition \ref{prop:a(1-2s)=|a|_3},
we have 
\begin{align*}
    \sum_{k \in [3]} a_{i k} q_k - b_i 
    &= \sum_{k \in [3]} a_{i k} ((d - 1) (1 - x_k) + x_k)
    - \sum_{k \in [n]} a_{i k} d (1 - x_k) \\
    &= \sum_{k \in [3]} - a_{i k} (1 - 2 x_k) 
    = - a_{i i}(1 - 2 x_i) - a_{i j^i_3} (1 - 2 x_{j^i_3}) \\
    &= - |a_{i i}| + |a_{i j^i_3}| \ge 0.
\end{align*}
This completes the proof.
\end{proof}

\subsection{Proof of Proposition~\ref{prop:a(1-2s)=|a|_3}}
\begin{proposition}\label{prop:1-2x=sgnajj}
    For all integers $j \in [3]$, we have $1 - 2 x_j = \sgn(a_{j j})$.
\end{proposition}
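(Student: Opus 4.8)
The plan is to establish this identity by a direct case analysis on the sign of the diagonal entry $a_{jj}$, mirroring the argument for Proposition~\ref{prop:1-2x=sgna1j} in the two-row setting. The definition of the vector $x$ in the proof of Lemma~\ref{lem:m=3} is piecewise according to whether $a_{kk} > 0$ or $a_{kk} < 0$, so the whole statement reduces to matching these two branches against the two nonzero values of the sign function.

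First I would confirm that the definition of $x_j$ is exhaustive for each $j \in [3]$, i.e.\ that $a_{jj} \neq 0$. This is guaranteed by the structural reduction carried out just before the definition of $x$: in the case under consideration we have $\Lambda_{1,2} = \{1\}$, $\Lambda_{2,3} = \{2\}$, and $\Lambda_{1,3} = \{3\}$, and membership $j \in \Lambda_{i_1,i_2}$ forces $\sgn(a_{jj}) \neq 0$ by the definition of $\Lambda_{i_1,i_2}$. Hence each diagonal entry $a_{jj}$ is nonzero, so exactly one of the two cases in the definition of $x_j$ applies.

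With that in hand, the verification is immediate. If $a_{jj} > 0$, then $x_j = 0$ by definition, so $1 - 2 x_j = 1 = \sgn(a_{jj})$. If $a_{jj} < 0$, then $x_j = 1$, so $1 - 2 x_j = -1 = \sgn(a_{jj})$. In both cases the claimed equality holds, which completes the argument.

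I do not expect any genuine obstacle here: the statement is a mechanical identity, and the only point requiring a moment's care is the well-definedness of $x_j$, which is why I would explicitly invoke $a_{jj} \neq 0$ before splitting into cases. Once Proposition~\ref{prop:1-2x=sgnajj} is available, it feeds directly into the computation $a_{jj}(1 - 2 x_j) = |a_{jj}|\,\sgn(a_{jj})(1 - 2 x_j) = |a_{jj}|\,\sgn(a_{jj})^2 = |a_{jj}|$ needed for Proposition~\ref{prop:a(1-2s)=|a|_3}, exactly as in the two-row case.
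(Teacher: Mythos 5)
Your proof is correct and follows essentially the same route as the paper's: a direct two-case split on the sign of $a_{jj}$ matching the branches in the definition of $x_j$. Your explicit check that $a_{jj}\neq 0$ (via $j\in\Lambda_{i_1,i_2}$) is a small but welcome addition that the paper leaves implicit, and your case $a_{jj}<0$ correctly gives $-1$ where the paper's text contains an evident typo ($0$ in place of $-1$).
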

\begin{proof}
If $a_{j j} > 0$, then $x_{j} = 0$ and $\sgn(a_{j j})=1$.
If $a_{j j} < 0$, then $x_{j} = 1$ and $\sgn(a_{j j})=0$.
Since $x_{j} = 0$ (resp.\ $x_{j} = 1$) 
implies $1 - 2 x_{j}=1$ (resp.\ $1 - 2 x_{j}=0$), 
we have $1 - 2 x_{j} = \sgn(a_{j j})$.
\end{proof}

\begin{proposition}\label{prop:1-2x=sgnajj2}
    For all integers $i \in [3]$ and all integers $s \in \{ 2 , 3 \}$
    if $j^i_s \neq i$, then we have $1 - 2 x_{j^i_s} = - \sgn(a_{i j^i_s})$.
\end{proposition}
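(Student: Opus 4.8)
The plan is to mirror the structure of Proposition~\ref{prop:1-2x=sgnajj}, since Proposition~\ref{prop:1-2x=sgnajj2} is its natural companion for off-diagonal entries. Recall that $x_{j^i_s}$ is defined by the diagonal rule: $x_k = 0$ if $a_{kk} > 0$ and $x_k = 1$ if $a_{kk} < 0$. So to evaluate $1 - 2x_{j^i_s}$, I first need to pin down the diagonal sign $\sgn(a_{j^i_s\, j^i_s})$ of the column $j^i_s$, and then relate it to the off-diagonal sign $\sgn(a_{i\, j^i_s})$. The bridge between these two is the structural fact, established in the analysis preceding Lemma~\ref{lem:m=3}, that for $n = 3$ we may assume $\Lambda_{1,2} = \{1\}$, $\Lambda_{2,3} = \{2\}$, $\Lambda_{1,3} = \{3\}$, with $a_{12} = a_{23} = a_{31} = 0$, and each column $k$ has its two nonzero entries in exactly the rows prescribed by the $\Lambda$ membership of $k$.

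First I would fix $i \in [3]$ and $s \in \{2,3\}$ with $j^i_s \neq i$, and write $\ell := j^i_s$ for brevity. Since $\ell \neq i$ and the only zero in column $\ell$ sits in one prescribed row, the entry $a_{i\ell}$ is nonzero; moreover column $\ell$ has exactly one positive and one negative nonzero entry (by the elimination hypothesis, as used throughout). The key observation is that $\ell$ is one of the indices $1,2,3$, so the diagonal entry $a_{\ell\ell}$ is itself a nonzero entry of column $\ell$. I would argue that $a_{i\ell}$ and $a_{\ell\ell}$ are the two nonzero entries of column $\ell$ (one in row $i$, one in row $\ell$), hence they have opposite signs: $\sgn(a_{i\ell}) = -\sgn(a_{\ell\ell})$. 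Applying Proposition~\ref{prop:1-2x=sgnajj} to index $\ell$ gives $1 - 2x_\ell = \sgn(a_{\ell\ell})$, and combining the two yields $1 - 2x_{\ell} = \sgn(a_{\ell\ell}) = -\sgn(a_{i\ell})$, which is exactly the claim $1 - 2x_{j^i_s} = -\sgn(a_{i\, j^i_s})$.

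The step I expect to be the main obstacle is verifying rigorously that the two nonzero entries of column $\ell$ lie precisely in rows $i$ and $\ell$, so that I can conclude their signs are opposite. This requires carefully reading off the $\Lambda$ structure: with $\Lambda_{1,2} = \{1\}$, $\Lambda_{2,3} = \{2\}$, $\Lambda_{1,3} = \{3\}$, column $1$ has its zero in row $3$ (so nonzeros in rows $1,2$), column $2$ has its zero in row $1$ (nonzeros in rows $2,3$), and column $3$ has its zero in row $2$ (nonzeros in rows $1,3$). In each case one of the two nonzero rows is the diagonal row $\ell$ itself, and since $i$ is the \emph{other} nonzero row (we assumed $a_{i\ell} \neq 0$ and $i \neq \ell$), the pair $\{i, \ell\}$ is exactly the support of column $\ell$. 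The sign-opposition $\sgn(a_{i\ell}) = -\sgn(a_{\ell\ell})$ then follows because that column belongs to the relevant $\Lambda_{i,\ell}$ (or $\Lambda_{\ell,i}$) set, whose defining condition is precisely $\sgn(a_{i\ell}) = -\sgn(a_{\ell\ell}) \neq 0$. Once this bookkeeping is handled, the conclusion is immediate; the whole argument is short, the difficulty being entirely in the case-tracking of which row carries the zero.
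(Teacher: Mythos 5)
Your proposal is correct and reaches the conclusion by the same two-step scheme as the paper: first establish $\sgn(a_{j^i_s\,j^i_s}) = -\sgn(a_{i\,j^i_s})$, then invoke Proposition~\ref{prop:1-2x=sgnajj}. The only real difference is in how the sign identity is obtained. The paper argues by contradiction: if the two signs agreed, non-eliminability of the column would force a third row $t$ with the opposite sign, putting $j^i_s$ in $\Lambda_{t,j^i_s} \cap \Lambda_{t,i}$ and hence, via Proposition~\ref{prop:row3-3}, contradicting the standing assumption that every $|\Lambda_{i_1,i_2}| = 1$. You instead read the identity directly off the WLOG normal form ($\Lambda_{1,2}=\{1\}$, $\Lambda_{2,3}=\{2\}$, $\Lambda_{1,3}=\{3\}$, $a_{12}=a_{23}=a_{31}=0$): each column $\ell$ has exactly two nonzero entries, one of which is $a_{\ell\ell}$, and the $\Lambda$-membership of $\ell$ says precisely that they have opposite signs. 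Both arguments rest on the same structural facts, so this is a presentational rather than a substantive divergence; your version is more concrete, the paper's avoids committing to the labeling. One small point to tighten: your stated reason that $a_{i\ell}\neq 0$ (``the only zero in column $\ell$ sits in one prescribed row'') does not by itself rule out that this prescribed row is $i$; the correct justification is that $\ell = j^i_s$ with $s\ge 2$, while $j^i_1$ is the unique column in which row $i$ vanishes, so $a_{i j^i_s}\neq 0$. With that one-line fix the argument is complete.
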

\begin{proof}
We prove $\sgn(a_{j^i_s j^i_s}) = - \sgn(a_{i j^i_s})$.
Since $s \neq 1$, we have $ a_{i j^i_s} \neq 0$.
Suppose that $\sgn(a_{j^i_s j^i_s}) = \sgn(a_{i j^i_s})$.
Since $A$ cannot be eliminated at any column, 
we have $\sgn(a_{t j^i_s}) = - \sgn(a_{j^i_s j^i_s}) = - \sgn(a_{i j^i_s})$
, where $t \in [3] \setminus \{j^i_s,i\}$.
Therefore, we have $j^i_s \in \Lambda_{t,j^i_s} \cap \Lambda_{t,i}$.
Proposition \ref{prop:row3-3} that there exist integers $i_1,i_2 \in [3]$ such that $|\Lambda_{i_1,i_2}| \ge 2$.
It contradicts the assumption that $|\Lambda_{i_1,i_2}| = 1$.
Thus, we have $\sgn(a_{j^i_s j^i_s}) = - \sgn(a_{i j^i_s})$.

By Proposition \ref{prop:1-2x=sgnajj}, we have $1 - 2 x_{j^i_s} = \sgn(a_{j^i_s j^i_s}) = - \sgn(a_{i j^i_s})$.
This completes the proof.
\end{proof}

\begin{proof}[Proof of Proposition~\ref{prop:a(1-2s)=|a|_3}]
By Proposition \ref{prop:1-2x=sgnajj}, for all integers $j \in [3]$, we have
\[
a_{j j}(1 - 2 x_{j}) 
= |a_{j j}|\sgn(a_{j j})(1 - 2 x_{j}) 
= |a_{j j}| \sgn(a_{j j})^2 = |a_{j j}|.
\]
By Proposition \ref{prop:1-2x=sgnajj2}, for all integers $i \in [3]$ and all integers $s \in \{ 2 , 3 \}$, we have 
\[
a_{i j^i_s}(1 - 2 x_{j^i_s}) 
= |a_{i j^i_s}|\sgn(a_{i j^i_s})(1 - 2 x_{j^i_s}) 
= - |a_{i j^i_s}| \sgn(a_{i j^i_s})^2 = - |a_{i j^i_s}|.
\]
This completes proof.
\end{proof}

\end{document}